\title{Compressibility Measures and Succinct Data Structures for Piecewise Linear Approximations} %TODO Please add
\author{Paolo Ferragina}{Department L'EMbeDS, Sant'Anna School of Advanced Studies, Pisa, Italy\\Department of Computer Science, University of Pisa, Italy}{paolo.ferragina@santannapisa.it}{https://orcid.org/0000-0003-1353-360X}{}
\author{Filippo Lari\footnote{Corresponding author}}{Department of Computer Science, University of Pisa, Italy}{filippo.lari@phd.unipi.it}{https://orcid.org/0009-0000-6817-6561}{}
\authorrunning{P. Ferragina and F. Lari}
\keywords{Piecewise Linear Approximations, Succinct Data Structures, Lower Bounds}
\begin{document}

\maketitle

%TODO mandatory: add short abstract of the document
\begin{abstract}
We study the problem of deriving compressibility measures for \emph{Piecewise Linear Approximations} (PLAs), i.e., error-bounded approximations of a set of two-dimensional {\em increasing} data points using a sequence of segments. Such approximations are widely used tools in implementing many \emph{learned data structures}, which mix learning models with traditional algorithmic design blocks to exploit regularities in the underlying data distribution, providing novel and effective space-time trade-offs. 

We introduce the first lower bounds to the cost of storing PLAs in two settings, namely {\em compression} and {\em indexing}. We then compare these compressibility measures to known data structures, and show that they are asymptotically optimal up to a constant factor from the space lower bounds. Finally, we design the first data structures for the aforementioned settings that achieve the space lower bounds plus small additive terms, which turn out to be {\em succinct} in most practical cases. Our data structures support the efficient retrieval and evaluation of a segment in the (compressed) PLA for a given $x$-value, which is a core operation in any learned data structure relying on PLAs.

As a result, our paper offers the first theoretical analysis of the maximum compressibility achievable by PLA-based learned data structures, and provides novel storage schemes for PLAs offering strong theoretical guarantees while also suggesting simple and efficient practical implementations.

\end{abstract}

\section{Introduction}\label{sec:introduction}

The rapid growth of data has led to the development of novel techniques for designing data structures that leverage regularities in the underlying data distribution to deliver faster or more succinct solutions to proper query operations. This novel line of research, known as \emph{learned data structures}~\cite{KraskaBCDP18}, integrates machine learning models with traditional algorithmic techniques to achieve improved space occupancy and query time performance.  

A widely used tool in the field of learned data structures is the so-called \emph{Piecewise Linear Approximation} (PLA), an error-bounded approximation of a two-dimensional set of points using a sequence of segments~\cite{ORourke81}. The common use of PLAs is due to their ability to capture linear trends in the input data, which frequently occur in many practical applications, while using little space and providing fast inference time. In this work, we focus on PLAs arising in two settings: (1) the \emph{compression} setting, where the input sequence of points is increasing in both coordinates but the $x$-values are also consecutive; and (2) the \emph{indexing} setting, where instead the $y$-values of the input sequence are consecutive. These two settings are general enough to cover a broad range of applications, including $rank$ and $select$ dictionaries~\cite{FerraginaFMV23,BoffaFV22}, time series compression~\cite{GuerraVBF25}, string indexing~\cite{WangTWC20}, one-dimensional indexing~\cite{KraskaBCDP18,FerraginaV20}, multi-dimensional indexing~\cite{NathanDAK20, DingNAK20}, range minimum queries~\cite{FerraginaL25}, monotone minimal perfect hashing~\cite{FerraginaLSV23}, and range filtering~\cite{VaidyaKCKMI22}, to mention a few.

Prior work concerning the study of theoretical properties of PLAs has primarily focused on providing {\em upper bounds} on the optimal number of segments $\ell$ as a function of the sequence length $n$ and the maximum allowed error $\varepsilon$, that is the maximum vertical distance between the $y$-coordinate of any input point and its \emph{covering} segment. Such a relationship is fundamental from a theoretical perspective since it can explain the surprising performance of learned data structures and also allow a theoretical comparison with their classical counterparts. In this context, it is worth mentioning the pioneering study of Ferragina et al.~\cite{FerraginaLV21} showing that in the indexing case under the assumption that the gaps between consecutive keys are independently and identically distributed random variables, with constant mean and variance, the optimal number of segments $\ell$ is $\Theta(n/\varepsilon^{2})$ with high probability. Later works~\cite{ZeighamiS23, CroquevielleYL025} provided better upper bounds, however, their PLAs were restricted to be formed by {\em horizontal segments}.

%More recently, Ferragina and Lari~\cite{FerraginaL25} considered the compression scenario in the design of learned data structures for the range minimum query problem~\cite{FischerH11} providing an upper bound on the expected size of PLAs, namely $\mathbb{E}[\ell] = O(n/\varepsilon)$ on uniformly distributed inputs.

%Ferragina and Lari~\cite{FerraginaL25} considered a scenario arising in the design of learned data structures for the range minimum query problem~\cite{FischerH11}, where the underlying sequence falls in the compression setting. In such a case, one could be tempted to frame the result of~\cite{FerraginaLV21} by considering the vertical gaps between consecutive points, however, in such a case, these gaps are not independent of each other. Still, Ferragina and Lari gave an upper bound on the expected number of segments, showing that $\mathbb{E}[\ell] = O(n/\varepsilon)$ on uniformly distributed inputs.

As far as {\em lower bounds} are concerned, we mention the work of Zeighami and Shahabi~\cite{ZeighamiS24} which was the first to limit from below the model size (e.g., the number of segments $\ell$ in a PLA or the number of neurons in a neural network) of any learned data structure achieving a given desired accuracy $\varepsilon$ for a wide range of problems arising in database systems such as one- and multi-dimensional indexing, cardinality estimation, and range-sum estimation.

In our work, we assume that the segments have \emph{arbitrary slopes} and the number of segments $\ell$ forming an optimal PLA is known (e.g., computed by \cite{ORourke81}), and thus provide the first information-theoretic lower bound to the storage complexity of that PLA in both the compression and the indexing settings (see Theorems~\ref{th:compr_setting} and~\ref{th:indexing_setting}). We then compare two representative data structures against these lower bounds, namely the LA-vector by Boffa et al.~\cite{BoffaFV22} (for the compression setting) and the PGM-index by Ferragina \& Vinciguerra~\cite{FerraginaV20} (for the indexing setting), showing that both are optimal up to a constant factor. Finally, we conclude our work by designing two novel data structures for the aforementioned compression and indexing settings, that provide the same asymptotic query time as the LA-vector and the PGM-index, but with small additive terms to the introduced lower bounds.

%achieving small additive terms to the introduced lower bounds, which turn out to be succinct in most practical cases, and most importantly provide the same asymptotic query time as the LA-vector and the PGM-index.

\section{Background}\label{sec:background}

We assume the transdichotomous word RAM model where the word size $w$ matches the logarithm of the problem size and arithmetic, logic, and bitwise operations, as well as access to $w$-bit memory cells, take $O(1)$ time. To simplify, we denote with $\log{x}$ the logarithm in base $2$ of $x$, and we assume that logarithms hide their ceiling and thus return integers.

Given a static sorted sequence $S = 1\leq x_{1} < x_{2} < \dots < x_{n} \leq u $ of $n$ elements drawn from a finite integer universe $U$ of size $u$, we consider the following fundamental operations:

\begin{itemize}
    \item $rank(S, x)$, which for $x \in U$ returns the number of $S$'s elements that are smaller or equal to $x$, i.e. $|\{x_{i} \in S\,|\,x_{i} \leq x\}|$.
    \item $select(S,i)$, which for $i \in \{1,2,\dots,n\}$, returns the element in $S$ of position $i$, i.e. $rank(S,x) = i$. We assume $select(S,0) = 0$.  
\end{itemize}

If $S$ is represented with its characteristic bit vector $B_{S}$, namely, a bit vector of length $u$ such that $B_{S}[i] = 1$ if $i \in S$, and $B_{S}[i] = 0$ otherwise, then supporting such operations is equivalent to support the $rank_{1}(B_{S},x)$, which yields the number of $1$s before position $x$, and $select_{1}(B_{S},i)$, which gives the position of the $i$-th $1$ in $B_{S}$. These operations have been studied extensively with many practical and theoretical results, and are at the core of any succinct or compressed data structure (see~\cite{Navarro16} for a comprehensive treatment). For our purposes, we briefly mention that they can be supported in $O(1)$ time while using only $o(u)$ additional bits on top of the underlying bit vector.

The information-theoretic lower bound to store $S$ is $\mathcal{B}(u,n) = \log{\binom{u}{n}}$. Such value is related to the zero-order empirical entropy of its characteristic bit vector $B_{S}$, defined as $u\mathcal{H}_{0}(B_{S}) = n\log{\frac{u}{n}} + (u-n)\log{\frac{u}{u-n}}$. In fact, with simple arithmetic manipulations one obtains $\mathcal{B}(u,n) = u\mathcal{H}_{0}(B_{S}) - O(\log{u})$. The best upper bound to represent $S$ with the minimum redundancy with respect to $\mathcal{B}(u,n)$, and while still being able to perform $rank$ and $select$ in optimal constant time, is given by the following data structure due to P\u{a}tra\c{s}cu~\cite{Patrascu08}:

\begin{theorem}[Succincter]\label{th:succincter}
    For any $c>0$, a monotone sequence $S$ of $n$ positive integers drawn from a universe of size $u$ can be stored using $\mathcal{B}(u,n) + O(u/\log^{c}(u/c)) + O(u^{3/4}\: poly\log{u})$ bits and support $rank$ and $select$ in $O(c)$ time.
\end{theorem}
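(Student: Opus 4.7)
The plan is to build a multiway tree over the integer universe $[1,u]$, what Pătrașcu calls an \emph{aB-tree}. The tree has branching factor $B \approx \log^{O(1)} u$ and depth roughly $c$, so that each leaf corresponds to a sub-universe small enough to be handled by constant-time table lookups. At each internal node, the stored information consists of the $B$ counts telling how many of the $n$ elements fall into each child's sub-universe. With these counts in place, both $rank(S,x)$ and $select(S,i)$ reduce to a top-down traversal spending $O(1)$ time per level via precomputed tables, for $O(c)$ time overall.

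The space accounting telescopes naturally: if a node holds a count $m$ and splits it into $(m_1,\dots,m_B)$, the local information content is $\log\binom{m}{m_1,\dots,m_B}$ bits, and summing along any root-to-leaf path gives exactly $\log\binom{u}{n} = \mathcal{B}(u,n)$. The genuine difficulty is that each local multinomial value requires a \emph{non-integer} number of bits, and there are $\Theta(u/B^{c})$ internal nodes; paying one wasted bit per node (the naive rounding-up approach) would already blow through the permitted redundancy.

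The key technical idea, which I would borrow directly from Pătrașcu, is the \emph{spillover} representation: the encoding at each node is an integer in a range whose size is the exact multinomial coefficient, not a power of two, and the leftover fractional bits of one node are absorbed multiplicatively into the encoding carried to its parent. Provided the arithmetic for packing/unpacking such mixed-radix integers fits in a constant number of word-RAM operations, the per-node waste is replaced by the single global rounding loss $O(u/\log^{c}(u/c))$, where $c$ controls the depth of the recursion. On top of this, a set of lookup tables indexed by $w/2$-bit patterns supplies the partial ranks needed inside a node; because these tables depend only on $B$ and the leaf size, not on $S$, they can be shared globally and kept within the auxiliary budget $O(u^{3/4}\,\mathrm{poly}\log u)$.

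The hard part, in my view, is the recursive analysis of the spillover: one must show (i) that encoding and decoding a multinomial-bounded integer can indeed be performed in $O(1)$ word operations per visited node, and (ii) that sharing fractional bits across $c$ levels of recursion accumulates redundancy no worse than $O(u/\log^{c}(u/c))$. A secondary technical nuisance is choosing the branching factor $B$ and the leaf universe size so that the lookup tables scale as $O(u^{3/4}\,\mathrm{poly}\log u)$ \emph{independently} of $c$; this forces the auxiliary tables to be subproblem-oblivious, which is exactly what makes the query time a clean $O(c)$ rather than $O(c \cdot f(u))$ for some navigation cost $f$.
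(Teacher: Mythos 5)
The paper does not actually prove this statement: Theorem~\ref{th:succincter} is imported as a black box from P\u{a}tra\c{s}cu's \emph{Succincter} work~\cite{Patrascu08}, and your sketch is a faithful outline of precisely that source's argument (an aB-tree over the universe with polylogarithmic branching and depth $O(c)$, child-count multinomials whose logs telescope to $\mathcal{B}(u,n)$, the spillover/mixed-radix encoding to avoid a one-bit loss per node, and globally shared lookup tables fitting in the $O(u^{3/4}\,\mathrm{poly}\log u)$ term). Your approach therefore coincides with the proof behind the cited result, with the genuinely hard steps (constant-time spillover coding and the redundancy accounting across the $c$ levels) correctly identified and deferred to~\cite{Patrascu08}.
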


Theorem~\ref{th:succincter} implies constant time $rank$ and $select$ operations in $\mathcal{B}(u,n) + O(u/\log^{c}{u})$ space, which essentially matches the lower bound of P\u{a}tra\c{s}cu \& Viola~\cite{PatrascuV10}. To further reduce the space redundancy on top of $\mathcal{B}(u,n)$, one must allow for slower $rank$ or $select$ operations. A prominent example is given by the Elias-Fano encoding scheme~\cite{Fano71, Elias74}, which attains the following space-time trade-off for the aforementioned operations~\cite{Navarro16}:

\begin{theorem}[Elias-Fano encoding]\label{th:elias_fano}
    A monotone sequence $S$ of $n$ positive integers drawn from a universe of size $u$ can be stored using $2n + n\log{\frac{u}{n}} + o(n)$ bits while supporting $select$ in $O(1)$ time and rank in $O(\min\{\log{\frac{u}{n}}, \log{n}\})$ time.
\end{theorem}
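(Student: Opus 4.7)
The plan is to describe the classical Elias-Fano encoding, verify its space usage, and then implement $select$ and $rank$ within the stated time bounds. The core idea is to split each $x_i \in [1,u]$ into a high part consisting of its upper $\log n$ bits and a low part consisting of its lower $\log(u/n)$ bits. Since $S$ is monotone, the high parts form a monotone sequence of values in $[0,n)$, while each low part carries the remaining $\log(u/n)$ bits per element.

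I would store the low parts in an array $L$ of $n$ fixed-width $\log(u/n)$-bit entries, which uses exactly $n\log(u/n)$ bits and admits $O(1)$ random access. The high parts I would encode as a bit vector $H$ in which, for each bucket value $h\in[0,n)$, I append $c_h$ ones (one per element with that high part) followed by a separating zero. This contributes $n$ ones and $n$ zeros, so $|H|\le 2n$, and I would equip $H$ with the standard $o(n)$-bit auxiliary indexes supporting $select_1$, $select_0$, and $rank_1$ in $O(1)$ time. The total space is therefore $2n + n\log(u/n) + o(n)$ as claimed.

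For $select(S,i)$, the high part of $x_i$ is recovered as $select_1(H,i) - i$, since the number of zeros preceding the $i$-th one equals the index of its bucket; concatenating this value with $L[i]$ yields $x_i$ in $O(1)$ time. For $rank(S,x)$, I extract the high part $h$ of $x$ and use two $select_0$ queries on $H$ to obtain the range $[lo,hi]$ of positions whose high parts equal $h$; then I find the largest index $j\in[lo,hi]$ with low part at most that of $x$. A binary search on $L[lo..hi]$ costs $O(\log n)$ in general, but since the low parts live in a universe of size $u/n$, each bucket contains at most $u/n$ elements, so the same search also costs $O(\log(u/n))$; taking the cheaper of the two yields the stated $O(\min\{\log(u/n),\log n\})$ bound. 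The main subtlety I anticipate is the careful accounting around $H$, namely handling empty buckets cleanly, verifying that the $select_0$ boundaries align with positions in $L$, and confirming that the auxiliary $rank$/$select$ structures on $H$ truly fit in $o(n)$ bits; the rest of the argument is essentially bookkeeping over the high/low split.
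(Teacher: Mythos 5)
Your construction is the standard Elias--Fano scheme, which the paper does not prove but imports from the literature (Fano, Elias, and Navarro's book): high/low bit split with $\log(u/n)$-bit low parts stored verbatim, unary-coded high parts in a $2n$-bit vector equipped with $o(n)$-bit rank/select support, constant-time $select$ via $select_1(H,i)-i$, and $rank$ by locating the bucket with $select_0$ and binary searching its low parts --- so your proposal matches the intended proof. The only step worth making explicit is that the $O(\log\frac{u}{n})$ branch of the $rank$ bound uses the fact that the elements are distinct (the paper's sequences are strictly increasing), so a bucket spanning $u/n$ universe values holds at most $u/n$ elements; with that remark the argument is complete.
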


Indeed, the space usage of Theorem~\ref{th:elias_fano} can be expressed in terms of $\mathcal{B}(u,n)$ as follows:
\begin{equation}\label{eq:ef_entropy}
    2n + n\log{\frac{u}{n}}+o(n) = u\mathcal{H}_{0}(B_{S}) + O(n) = \mathcal{B}(u,n) + O(n+\log{u})
\end{equation}

Thus obtaining a lower redundancy term with respect to Theorem~\ref{th:succincter} whenever $S$ is very sparse, which, however, requires giving up the constant time $rank$ operation. The literature offers other trade-offs (e.g.,~\cite{OkanoharaS07}) that could be adopted in our solution, leading to corresponding time-space bounds. However, for the sake of presentation, in the remainder of the paper we focus on the results provided in Theorems~\ref{th:succincter} and~\ref{th:elias_fano}. 

%In the following, we also consider monotone sequences of $n$ elements drawn from a universe of size $u$ that may contain repeated values, and are not strictly increasing as in the previous case. The definitions of the $rank$ and $select$ operations remain unchanged in this setting. However, the information-theoretic lower bound for representing such sequences increases to $\log\binom{u+n-1}{n}$ bits. Intuitively, there are more possible sequences when repetitions are allowed. A standard trick to reuse any data structure designed for strictly monotone sequences (e.g., the one of Theorem~\ref{th:succincter}) is to map each element $x_i$ to $x_i + (i-1)$ for all $i = 1,2,\dots,n$. This transformation only adds a constant time overhead to the $select$ operation and increases the universe size from $u$ to $u+n-1$. Therefore, even when repetitions are allowed, one can rely on the techniques described for strictly increasing sequences to achieve space within the information-theoretic lower bound.

A new line of research has recently started to investigate novel methods for designing algorithms and data structures that leverage regularities in the underlying data to deliver faster or more succinct solutions to a wide variety of problems. Most of these methods rely on a technique known as \emph{Piecewise Linear Approximation} (PLA), which involves approximating a sequence of two-dimensional points using a set of segments.

\begin{definition}\label{def:pla}
    Given a set of $n$ two-dimensional data points $P \subseteq \mathbb{R}^{2}$, a Piecewise Linear Approximation of $P$ with maximum error $\varepsilon \geq 1$ is the smallest set of non-overlapping segments $\{s_{1},s_{2},\dots,s_{\ell}\}$ that entirely covers $P$, ensuring that every point in $P$ lies within a maximum vertical distance $\varepsilon$ from its corresponding segment.
\end{definition}

A PLA can always be computed optimally, hence producing the minimum number of segments $\ell$ in time proportional to the sequence length using an algorithm by O'Rourke~\cite{ORourke81}.

Here, we focus on PLAs computed from monotone sequences of two-dimensional data points, in which a point $(x_i,y_i)$ is smaller than $(x_{i+1},y_{i+1})$ because it has a smaller abscissa $x_i < x_{i+1}$ and a smaller ordinate $y_i < y_{i+1}$. In particular, we consider two specific scenarios: (1) the compression setting, where PLAs are computed on sequences having consecutive $x$-values (i.e., $x_{i+1} = x_i +1$, typically $x_1=1$), and (2) the indexing setting, in which PLAs are computed on sequences having consecutive $y$-values (i.e. $y_{i+1} = y_i +1$, typically $y_1=1$). As already mentioned in the introduction, despite the names, such scenarios are quite general and encompass a wide range of applications. To get a practical sense of the usage of PLAs in the design of learned data structures, we briefly outline two prominent examples.

In the design of learned and compressed $rank$ and $select$ dictionaries~\cite{BoffaFV22}, the input sequence $S = 1 \leq x_{1} < x_{2} < \dots < x_{n} \leq u$, is first mapped to a Cartesian plane representation where the $x$-axis corresponds to the sequence of positions $1, 2, \dots, n$, while the $y$-axis represents the sequence values, namely $x_i$s. This transformation converts $S$ into a set of monotonically increasing two-dimensional points: $\{(i,x_{i})\,|\,x_{i} \in S\}$. The key innovation of this approach is that the Cartesian plane representation reveals inherent regularities in the underlying sequence $S$. A PLA is then computed from such a sequence of points, and compression is achieved by storing for each point $(i,x_{i})$ the absolute difference between $x_{i}$ and the approximation of the segment for position $i$ using just $\log{(2\varepsilon+1)}$ bits. Performing a $select(S,i)$ operation then reduces to first finding the segment covering the abscissa value $i$, computing the approximate sequence value from such segment in $i$, and finally adding the correction term stored in position $i$ to retrieve the exact value. A $rank(S,x)$ operation can be naively solved by binary searching on the $select$ values, even if faster alternatives do exist~\cite{BoffaFV22}. 

In the design of learned one-dimensional indexes~\cite{FerraginaV20}, given the input sequence $S = 1 \leq x_{1} < x_{2} < \dots < x_{n} \leq u$, one performs the same mapping above onto a Cartesian plane representation. In this case, the $x$-axis contains the sequence values, while the $y$-axis corresponds to their respective positions $1, 2, \dots, n$. Again, this transformation converts $S$ into a set of monotonically increasing two-dimensional data points: $\{(x_{i},i)\,|\; x_{i}\in S\}$. The data structure then consists of a PLA computed from such a set of points and a search for a given key $x$ proceeds in three steps. First, the segment covering $x$ is identified through a binary search on the first keys covered by each segment. Second, the approximate position $i$ of $x$ is obtained from such a segment. Third, since the underlying sequence is monotonically increasing and the approximate position $i$ of $x$ is correct up to an error $\varepsilon$, it is \emph{corrected} via a (binary) search in the interval $[i-\varepsilon,i+\varepsilon]$. Such queries can be sped up by applying the same construction recursively to the sequence of first keys covered by each segment until a single one remains (see e.g. \cite{FerraginaV20,KraskaBCDP18}). This approach implements the index as a hierarchical structure, forming a tree of linear models. Consequently, searches involve a top-down traversal of the tree, trading faster queries for an increased space usage.

\section{Related Work}\label{sec:related_work}

In the following, we review existing methods for the lossless compression of PLAs, focusing on the $predict(x)$ operation, which, given a value lying on the $x$-axis, identifies the segment covering $x$ and computes the corresponding $y$-value \emph{predicted} by that segment.

Ferragina \& Vinciguerra introduced the PGM-index~\cite{FerraginaV20}, the first solution in the indexing case based on the optimal PLA computed by the O'Rourke algorithm~\cite{ORourke81}, i.e., the one consisting of the minimum number of segments $\ell$ providing an $\varepsilon$-approximation of the input points. In particular, they designed two compression methods for the segments. The first one hinges on the observation that, in the indexing setting, the intercepts are monotonically increasing. Denoting with $\beta_{i}$ the intercept of the $i$-th segment and with $y_{i}$ its first covered $y$-value, this easily follows since $\beta_{i} \in [y_{i}-\varepsilon,y_{i}+\varepsilon]$ and $\beta_{i+1} \in [y_{i+1}-\varepsilon,y_{i+1}+\varepsilon]$, but $y_{i}+\varepsilon \leq y_{i+1}-\varepsilon$ as in this particular case a segment covers at least $2\varepsilon$ keys~\cite[Lemma 2]{FerraginaV20}. Exploiting this, they convert the monotone sequence of floating-point intercepts into integers and store them with the data structure of Okanohara \& Sadakane~\cite{OkanoharaS07}. But, they stored uncompressed the first covered keys and slopes of each segment. For the sake of comparison, we provide the space usage for the PLA composing just the first layer of the PGM index. Since the universe of the underlying sequence is $u$, and each slope can be expressed as a rational number with a numerator and denominator respectively of $\log{n}$ and $\log{u}$ bits, they obtained the following result which we specialize via two approaches, one based on binary search (as proposed in~\cite{FerraginaV20}), and one using the data structure of P\u{a}tra\c{s}cu (see Theorem~\ref{th:succincter}):

\begin{lemma}[PGM-index~\cite{FerraginaV20}]\label{lem:pgm_space_time}
    The PLA composing the first layer of the PGM-index can be stored using:
    \begin{itemize}
        \item $\ell \, (1.92 + \log{\frac{n^{2}}{\ell}} + 2\, \log{u} + o(1))$ bits, while supporting the $predict$ operation in $O(\log{\ell})$ time via binary search.
        \item $\ell \, (1.92 + \log{\frac{n^{2}}{\ell}} + \, \log{u} + o(1)) + \log{\binom{u}{\ell}} + O(u/\log^c u)$ bits for any constant $c > 0$, while supporting the $predict$ operation in $O(c)$ time.
    \end{itemize}
\end{lemma}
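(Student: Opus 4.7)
The plan is to account separately, and then sum up, the space needed for the three pieces of information that describe each of the $\ell$ segments of the PLA: the first covered key, the slope (represented as a rational with numerator of $\log n$ bits and denominator of $\log u$ bits), and the intercept.

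For the intercepts I would first invoke the monotonicity observation made immediately before the lemma statement: since $\beta_i \in [y_i-\varepsilon,\, y_i+\varepsilon]$ and, in the indexing setting, each segment covers at least $2\varepsilon$ consecutive $y$-values (so $y_i+\varepsilon \leq y_{i+1}-\varepsilon$), one obtains $\beta_i \leq \beta_{i+1}$. After quantising the floating-point intercepts to integers at a precision that still preserves the $\varepsilon$-approximation, the sequence $\beta_1 \leq \beta_2 \leq \cdots \leq \beta_\ell$ lies in a universe of size $\Theta(n)$. I would then store it with the data structure of Okanohara \& Sadakane~\cite{OkanoharaS07}, whose bound on a monotone integer sequence of $\ell$ elements from a universe of size $O(n)$ contributes $\ell\log(n/\ell) + 1.92\,\ell + o(\ell)$ bits while supporting constant-time random access by index.

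The $\ell$ slope numerators and denominators I would keep in two plain arrays of total $\ell\log n + \ell\log u$ bits, with $O(1)$ access. For the $\ell$ first covered keys I would develop the two alternatives stated in the lemma. In the first variant, store them uncompressed in $\ell\log u$ bits and locate the segment covering an input $x$ by binary search, in $O(\log \ell)$ time. In the second variant, encode them with P\u{a}tra\c{s}cu's Succincter (Theorem~\ref{th:succincter}) in $\log\binom{u}{\ell} + O(u/\log^c u)$ bits, so that the index of the covering segment is returned by a single $rank$ query in $O(c)$ time. Once the index is known, evaluating the corresponding linear function at $x$ takes $O(1)$ time, so the overall time of $predict(x)$ matches the search bound.

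Summing the three contributions and regrouping $\ell\log(n/\ell) + \ell\log n = \ell\log(n^{2}/\ell)$ yields both space bounds in the statement, with the $o(\ell)$ and $o(1)$ terms absorbing the lower-order redundancy of the Okanohara--Sadakane structure. The main subtlety that I would need to argue carefully is the quantisation step that turns the floating-point intercepts into a strictly monotone integer sequence over an $O(n)$-sized universe without inflating the approximation error beyond $\varepsilon$; once this is justified, everything else reduces to direct bookkeeping and the plug-in application of the two cited results.
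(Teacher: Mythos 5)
Your proposal is correct and follows essentially the same route as the paper, which simply specializes the PGM-index storage scheme: intercepts made integer and monotone (via the $y_i+\varepsilon\le y_{i+1}-\varepsilon$ argument) and stored with Okanohara--Sadakane in $\ell\log\frac{n}{\ell}+1.92\,\ell+o(\ell)$ bits, slopes kept as plain rationals with $\log n$-bit numerators and $\log u$-bit denominators, and the first covered keys either stored verbatim with binary search or encoded with P\u{a}tra\c{s}cu's structure for $O(c)$-time $rank$; regrouping gives exactly the two stated bounds. The quantisation subtlety you flag is handled in the paper by citation (the conversion in the PGM-index paper, and the constant-error-increase argument of Ferragina \& Lari), so no new argument is needed there.
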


The second compression method targets the segment's slopes, and is based on the observation that the O'Rourke algorithm does not compute a single segment but a whole family of segments whose slopes identify an interval of reals. The compressor then works greedily (and optimally) by trying to assign the same slope to most segments of the PLA, thus reducing the number of slopes to be fully represented. This is very effective on some real-world datasets, however, the authors did not provide an analysis of its space occupancy.

Moving to the compression setting, Boffa et al. introduced the LA-vector~\cite{BoffaFV22}, the first learned and compressed $rank$ and $select$ dictionary based on the optimal PLA. Their compression scheme for segments hinges on the assumption that the sequence of intercepts is monotonically increasing, which is not necessarily the case in general. This is because, in the compression setting, segments may cover fewer than $2\varepsilon$ points, as the $y$-value can increase by more than one between consecutive $x$-values. Ferragina \& Lari~\cite{FerraginaL25} recently showed how to waive that assumption with a negligible impact on space occupancy, obtaining the following bound (we drop the cost of $cn$ bits for storing the correction vector $C$ in Theorem 3.2 of~\cite{BoffaFV22}, and we add the solution based on the data structure of P\u{a}tra\c{s}cu):

\begin{lemma}[LA vector, Theorem 2.3 in~\cite{BoffaFV22} and Theorem 5 in~\cite{FerraginaL25}]\label{lem:la_space_time}
    The PLA of the LA-vector can be stored using  
    
    \begin{itemize}
        \item $\ell\, (2\log{\frac{u}{\ell}} +  \log{\frac{n}{\ell}} + 6 + 2\log{(2\varepsilon+1)} + o(1))$ bits, while supporting the $predict$ operation in $O(\log{\ell})$ time via binary search.
        \item $\ell\, (2\log{\frac{u}{\ell}} + 4 + 2\log{(2\varepsilon+1)} + o(1)) + \log{\binom{n}{\ell}} + O(n/\log^c n)$ bits for any constant $c > 0$, while supporting the $predict$ operation in $O(c)$.
    \end{itemize}
\end{lemma}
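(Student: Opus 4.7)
The plan is to decompose the PLA storage into three per-segment contributions: (i) the starting $x$-position, (ii) the slope, and (iii) the intercept together with any $\varepsilon$-quantization parameters needed to express these as integers. The slope and intercept budget of roughly $2\log(u/\ell) + 2\log(2\varepsilon+1) + O(1)$ bits per segment is essentially what Theorem~2.3 of~\cite{BoffaFV22} already delivers, but under the auxiliary assumption that the sequence of segment intercepts is monotonically increasing. Since in the compression setting a segment may cover fewer than $2\varepsilon$ consecutive $x$-values, the first step is to invoke Theorem~5 of~\cite{FerraginaL25} to waive the monotonicity hypothesis with only a negligible additive cost. Stripping off the $cn$-bit correction vector, which belongs to the LA-vector but not to the PLA itself, leaves the per-segment slope/intercept contribution at $2\log(u/\ell) + 2\log(2\varepsilon+1) + O(1)$ bits.

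For the first bullet, the plan is to encode the $\ell$ segment starting positions, which live in $\{1,\dots,n\}$, with the Elias--Fano encoding of Theorem~\ref{th:elias_fano}. This contributes $\ell\log(n/\ell) + 2\ell + o(\ell)$ bits and allows $predict$ to locate the segment covering a given $x$ by binary searching the stored starting positions in $O(\log\ell)$ time; once the segment is identified, evaluation is constant-time arithmetic on its slope and intercept. Summing the two contributions, distributing the $2\ell$ overhead into the per-segment budget, and absorbing lower-order terms into $o(1)$ yields the first bullet with constant $6$.

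For the second bullet, the plan is to swap Elias--Fano for the succincter data structure of Theorem~\ref{th:succincter}, which stores the same $\ell$-element subset of $\{1,\dots,n\}$ in $\mathcal{B}(n,\ell) + O(n/\log^c n) = \log\binom{n}{\ell} + O(n/\log^c n)$ bits and supports $rank$ and $select$ in $O(c)$ time. Since identifying the segment covering $x$ reduces to a $rank$ query on the starting positions followed by $O(1)$ arithmetic, the $predict$ operation runs in $O(c)$ time. Removing the $2\ell$ additive overhead of Elias--Fano and folding everything else into the same per-segment budget explains the constant reduction from $6$ to $4$.

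The main obstacle I anticipate is showing that the trick of~\cite{FerraginaL25} for handling non-monotone intercepts composes cleanly with both encodings of the starting positions, so that the slope/intercept budget per segment is genuinely $2\log(u/\ell) + 2\log(2\varepsilon+1) + O(1)$ and not tied to any property of how the positions themselves are stored. A subsidiary point is verifying that the $o(1)$ per-segment slack truly absorbs both the $o(\ell)$ redundancy of Elias--Fano in the first case and the lower-order terms of the slope/intercept encodings in the second, and that the bookkeeping of constants ($6$ vs.\ $4$) tracks exactly the $2\ell$ discrepancy between the two position encodings.
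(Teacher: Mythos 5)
Your proposal is correct and follows essentially the same route as the paper, which does not prove this lemma from scratch but obtains it by citing Theorem~2.3/3.2 of the LA-vector paper and Theorem~5 of Ferragina \& Lari (to waive the monotone-intercepts assumption), dropping the $cn$-bit correction vector, and swapping the Elias--Fano encoding of the segment starting positions for P\u{a}tra\c{s}cu's structure to get the $O(c)$-time variant. Your bookkeeping of the constants ($6$ versus $4$ tracking the $2\ell$ Elias--Fano overhead on the positions, with $\log\binom{n}{\ell}+O(n/\log^{c}n)$ replacing $\ell\log\frac{n}{\ell}+2\ell$) matches how the paper itself decomposes and later reuses this bound in Section~\ref{subsec:space_close_compr}.
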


Lastly, Abrav and Medvedev proposed the PLA-index~\cite{AbrarM24} in a Bioinformatics application where the $x$- and $y$-values of the underlying sequence are increasing but not necessarily contiguous. They took an orthogonal approach to PLA compression by modifying the O'Rourke algorithm, enforcing each segment to start from the ending point of the previous one. This allows them to derive a more compact encoding, which may produce a suboptimal number of segments $\ell' \geq \ell$. As their approach does not guarantee optimality and targets different sequence types, we defer comparison to the journal version.

We conclude by noticing that none of these results focused on studying lower bounds to the cost of storing PLAs (possibly in compressed form). Therefore, the first novel contribution of our work lies in deriving the first compressibility measures for PLAs in the two aforementioned settings. It will turn out that the storage schemes for PLAs of the LA-vector and the PGM-index are asymptotically optimal up to a constant factor. Hence, our second contribution is to design the first data structures for the compression and indexing settings that achieve the space lower bounds plus small additive terms, which are {\em succinct} in most practical cases. 

\section{Compressibility Measures for PLAs}\label{sec:compr_measures_plas}

\begin{figure}[t]
    \centering
    \begin{subfigure}{0.47\textwidth}
        \centering
        \begin{tikzpicture}[scale=0.8]
            \begin{axis}[
                xlabel={\textit{positions}},
                ylabel={\textit{values}},
                ymin=0, ymax=11,
                xmin=0, xmax=11,
                xtick=\empty,
                ytick=\empty,
                xtick={1,3.75,5,9},
                xticklabels={$x_{1} = 1$, \small{$x_{2}-1$}, $x_{2}$, $x_{3}$},
                ytick={5, 6.2, 7, 8},
                yticklabels={$y_{2}$, $\beta_{2}$, $y'_{2}$, $\gamma_{2}$},
                major tick length=0pt,
            ]

            % segments
            \addplot[domain=1:4, samples=2, thick, line width=1.5pt] {0.65*(x-1)+1}; %1
            \addplot[domain=5:8, samples=2, thick, line width=1.5pt] {0.6*(x-5)+6.2}; %2
            \addplot[domain=9:11, samples=2, thick, line width=1.5pt] {2*(x-9)+6.1}; %3
        
            \draw[dotted, thick] (axis cs:1,1) -- (axis cs:1,0);

            % dots second segment
            \draw[dotted, thick] (axis cs:5,4) -- (axis cs:5,0);
            \draw[dotted, thick] (axis cs:0,5) -- (axis cs:5,5);
            \draw[dotted, thick] (axis cs:0,6.2) -- (axis cs:5,6.2); % beta_2

            \draw[dotted, thick] (axis cs:0,8) -- (axis cs:8,8); %dots y_2 and y'_2
            \draw[dotted, thick] (axis cs:0,7) -- (axis cs:8,7);
            
            \draw[dotted, thick] (axis cs:9,0) -- (axis cs:9,8);
            \draw[dotted, thick] (axis cs:4,0) -- (axis cs:4,2.95);
        
            \draw[fill=black] (axis cs:5,5) circle (1.5pt);
            \draw[fill=black] (axis cs:8,7) circle (1.5pt); %dot y_2
            \draw[fill=black] (axis cs:9,7.5) circle (1.5pt);
        
            \draw[thick] (axis cs:5, 3.5) -- (axis cs:5, 6.5); 
            \draw[thick] (axis cs:4.9, 3.5) -- (axis cs:5.1, 3.5);
            \draw[thick] (axis cs:4.9, 6.5) -- (axis cs:5.1, 6.5);
        
            \node[anchor=east] at (axis cs:5, 6.5) {\tiny{$y_{2} + \varepsilon$}};
            \node[anchor=east] at (axis cs:5, 3.5) {\tiny{$y_{2} - \varepsilon$}};

            % interval y_2
            \draw[thick] (axis cs:8, 5.5) -- (axis cs:8, 8.5); 
            \draw[thick] (axis cs:7.9, 5.5) -- (axis cs:8.1, 5.5);
            \draw[thick] (axis cs:7.9, 8.5) -- (axis cs:8.1, 8.5);
        
            \node[anchor=east] at (axis cs:8, 8.5) {\tiny{$y'_{2} + \varepsilon$}};
            \node[anchor=east] at (axis cs:8, 5.5) {\tiny{$y'_{2} - \varepsilon$}};

            % last vertical interval
            \draw[thick] (axis cs:9, 6) -- (axis cs:9, 9); 
            \draw[thick] (axis cs:8.9, 6) -- (axis cs:9.1, 6);
            \draw[thick] (axis cs:8.9, 9) -- (axis cs:9.1, 9);
            
            \end{axis}
        \end{tikzpicture}
        \subcaption{}
        \label{fig:pla_compression}
    \end{subfigure}
    \hfill
    \begin{subfigure}{0.47\textwidth}
        \centering
        \begin{tikzpicture}[scale=0.8]
            \begin{axis}[
                xlabel={\textit{values}},
                ylabel={\textit{positions}},
                ymin=0, ymax=15,
                xmin=0, xmax=15,
                xtick=\empty,
                ytick=\empty,
                xtick={1,5,11,13},
                xticklabels={$x_{1}$, $x_{2}$, $x'_{2}$, $x_{3}$},
                ytick={3.75, 5, 12, 13},
                yticklabels={$\beta_{2}$, $y_{2}$, $y_{3}-1$, $\gamma_{2}$},
                major tick length=0pt,
            ]
        
            \addplot[domain=1:3, samples=2, thick, line width=1.5pt] {0.65*(x-1)+1};
            \addplot[domain=5:11, samples=2, thick, line width=1.5pt] {1.5*(x-5)+3.75};
            \addplot[domain=13:15, samples=2, thick, line width=1.5pt] {0.3*(x-13)+14};
        
            \draw[dotted, thick] (axis cs:1,1) -- (axis cs:1,0);
            \draw[dotted, thick] (axis cs:3,0) -- (axis cs:3,2.3);
            
            \draw[dotted, thick] (axis cs:5,4) -- (axis cs:5,0);
            \draw[dotted, thick] (axis cs:0,3.75) -- (axis cs:5,3.75);
            \draw[dotted, thick] (axis cs:0,12.75) -- (axis cs:11,12.75); %gamma_2
            \draw[dotted, thick] (axis cs:13,0) -- (axis cs:13,14); 

            \draw[dotted, thick] (axis cs:0,12) -- (axis cs:11,12); 
            \draw[dotted, thick] (axis cs:0,5) -- (axis cs:5,5);
        
            \draw[fill=black] (axis cs:5,5) circle (1.5pt); %b2
            \draw[fill=black] (axis cs:11,12) circle (1.5pt); %b3
        
            % interval for s_2
            \draw[thick] (axis cs:5, 3.5) -- (axis cs:5, 6.5); 
            \draw[thick] (axis cs:4.9, 3.5) -- (axis cs:5.1, 3.5);
            \draw[thick] (axis cs:4.9, 6.5) -- (axis cs:5.1, 6.5);
        
            \node[anchor=east] at (axis cs:5, 6.75) {\tiny{$y_{2} + \varepsilon$}};
            \node[anchor=east] at (axis cs:5, 3.25) {\tiny{$y_{2} - \varepsilon$}};
        
            \draw[thick] (axis cs:11, 13.5) -- (axis cs:11, 10.5); 
            \draw[thick] (axis cs:10.9, 13.5) -- (axis cs:11.1, 13.5);
            \draw[thick] (axis cs:10.9, 10.5) -- (axis cs:11.1, 10.5);
        
            \node[anchor=north] at (axis cs:11, 14.5) {\tiny{$y_{3} - 1 + \varepsilon$}};
            \node[anchor=south] at (axis cs:11, 9.25) {\tiny{$y_{3} - 1 - \varepsilon$}};

            \draw[dotted, thick] (axis cs:11, 0) -- (axis cs:11, 10.5);
            
            \end{axis}
        \end{tikzpicture}
        \subcaption{}
        \label{fig:pla_indexing}
    \end{subfigure}
    \caption{The key parameters of PLAs arising in the two considered settings. Figure~\ref{fig:pla_compression} shows an example in the compression case. The values on the $x$-axis are consecutive, the first segment always begins at position $1$, and each segment starts at an $x$-value immediately after the ending of the previous one. Intercepts and the last $y$-values of each segment stay within a range of size $2\varepsilon+1$ from the first and last covered values of the sequence. Notice how the intercepts do not necessarily form an increasing sequence. Figure~\ref{fig:pla_indexing} considers the indexing case, which is similar to the compression setting except that the values on the $y$-axis are consecutive. Hence, the last covered $y$-value of a segment is implicitly determined by the first covered $y$-value of the next one, and the intercepts $\beta_{i}$s form a monotone sequence as $y_{i} + \varepsilon \leq y_{i+1}-\varepsilon$ since a segment covers at least $2\varepsilon$ keys~\cite[Lemma 2]{FerraginaV20}.}
    \label{fig:pla_compression_and_indexing}
\end{figure}
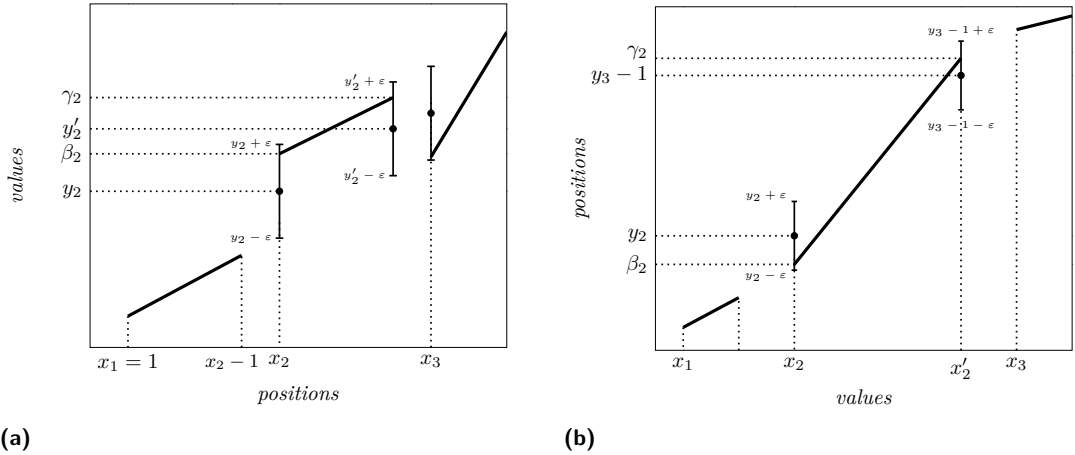

We begin by observing that, in general, the segments of a PLA, whether in the compression or indexing setting, have floating-point components. As discussed in Section~\ref{sec:related_work}, some existing compression schemes assume that segments either start or end at an integer $y$-coordinate to exploit integer compression techniques. In our context, we work under the same assumption for two reasons: first, when establishing a lower bound on the space required to represent a PLA, it is important to note that the class of PLAs with floating-point components is larger than that of PLAs with integer components. Consequently, any lower bound proven for the latter also applies to the former. Second, as proved by Ferragina \& Lari~\cite{FerraginaL25} this reduction has a negligible impact on the approximation error of the resulting PLA, and in fact, it only increases that error by a constant $3$. Therefore, without loss of generality, we can safely restrict our analysis to PLAs having segments with integer starting and ending ordinates.

Under this assumption, given the optimal number of segments $\ell$, the approximation error $\varepsilon$, the universe size $u$, and the length of the sequence $n$, we consider the problem of counting the number of possible PLAs that can be constructed using these parameters, in both the compression and indexing cases. Then, by a simple information-theoretic argument,  taking the logarithm of such quantities gives a lower bound on the number of bits required to represent any PLA in the considered settings, and this provides a compressibility measure that any compression scheme should strive to achieve. Starting from PLAs arising in the compression setting, we prove the following theorem:  

\begin{theorem}\label{th:compr_setting}
    Let $\mathcal{C}(\ell,\varepsilon,u,n, \mathbf{y})$ be the set of PLAs of $\ell$ segments having maximum error $\varepsilon$, covering a monotone sequence of $n$ elements drawn from a universe of size $u$ in the compression setting, and having $\mathbf{y} = y_1 \leq y_2 \leq \dots \leq y_{\ell}$ as the first covered $y$-values. Then:
    \begin{equation*}
        \lvert \mathcal{C}(\ell,\varepsilon,u,n,\mathbf{y}) \rvert = \binom{n-\ell-1}{\ell-1}\binom{u+\ell-1}{\ell}\left(\prod_{i=1}^{\ell-1}(y_{i+1}-y_{i}+1)\right)(2\varepsilon+1)^{2\ell}
    \end{equation*}
\end{theorem}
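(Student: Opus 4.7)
The plan is to exhibit a bijection between $\mathcal{C}(\ell,\varepsilon,u,n,\mathbf{y})$ and a Cartesian product of four independently-chosen combinatorial objects, and then invoke the multiplication principle. Given the fixed first covered $y$-values $\mathbf{y}$, each PLA in the class is uniquely determined by a tuple consisting of: (i) the partition of the $n$ consecutive $x$-positions into the $\ell$ contiguous $x$-intervals covered by the segments; (ii) the last covered $y$-value $y'_i$ of each segment; (iii) the integer intercept $\beta_i$ at the left endpoint of each segment; and (iv) the integer endpoint height $\gamma_i$ at the right endpoint of each segment. The first step is to verify that, for a fixed $\mathbf{y}$, these four blocks can be chosen independently and that together they recover the PLA.

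Next I would count each block. For (i), using the standard property that an optimal PLA segment in the compression setting covers at least two consecutive positions (inherited from the O'Rourke construction underlying $\mathcal{C}$), the admissible segmentations correspond to compositions of $n$ into $\ell$ parts each of size at least $2$, which by stars-and-bars is $\binom{n-\ell-1}{\ell-1}$. For (iii) and (iv), the $\varepsilon$-approximation constraints $|\beta_i-y_i|\le\varepsilon$ and $|\gamma_i-y'_i|\le\varepsilon$ yield $2\varepsilon+1$ integer choices each, so $(2\varepsilon+1)^{2\ell}$ across the $\ell$ segments. These two counts are essentially immediate.

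Block (ii) is the crux. Exploiting strict monotonicity of the underlying sequence, the value $y'_i$ for $i<\ell$ lies between $y_i$ and $y_{i+1}$, giving $y_{i+1}-y_i+1$ options and hence the factor $\prod_{i=1}^{\ell-1}(y_{i+1}-y_i+1)$; the choice of $y'_\ell$ within $\{1,\dots,u\}$, together with the residual freedom at the right boundary, is enumerated through the classical bijection between weakly increasing $\ell$-tuples over $\{1,\dots,u\}$ and multisets of size $\ell$, producing the factor $\binom{u+\ell-1}{\ell}$. The main technical obstacle is precisely to establish that these constraints on $(y'_1,\dots,y'_\ell)$ decouple cleanly into the claimed product, with no hidden coupling between the within-sequence monotonicity constraints and the universe bound $u$ that would absorb or introduce spurious factors. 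Once this decoupling is settled, and one has checked that the parameterization is both injective and exhaustive onto $\mathcal{C}(\ell,\varepsilon,u,n,\mathbf{y})$, the multiplication principle applied to the four counts yields the formula in the statement.
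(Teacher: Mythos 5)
There is a genuine gap, and it sits exactly where you located ``the crux.'' In the paper's proof the factor $\binom{u+\ell-1}{\ell}$ does not come from the last covered values at all: it counts the number of ways to choose the first covered $y$-values $y_1 \leq y_2 \leq \dots \leq y_\ell$ themselves, as a weakly increasing $\ell$-tuple (a multiset of size $\ell$) over $\{1,\dots,u\}$; the intercepts are then enumerated as ``pick $y_i$, then pick $\beta_i \in [y_i-\varepsilon, y_i+\varepsilon]$,'' which is what couples the binomial to the $(2\varepsilon+1)^{\ell}$ term. In other words, despite the statement's conditioning on $\mathbf{y}$, the paper's count lets $\mathbf{y}$ range. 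Your decomposition instead declares $\mathbf{y}$ fixed, so the only degrees of freedom you have are the segmentation of the $x$-axis, the values $y'_1,\dots,y'_{\ell-1}$, and the $2\ell$ offsets; these yield at most $\binom{n-\ell-1}{\ell-1}\bigl(\prod_{i=1}^{\ell-1}(y_{i+1}-y_i+1)\bigr)(2\varepsilon+1)^{2\ell}$, and no choice of ``$y'_\ell$ within $\{1,\dots,u\}$, together with the residual freedom at the right boundary'' can manufacture the missing $\binom{u+\ell-1}{\ell}$: in the intended model $y'_\ell$ is forced to equal $u$ (it is the last value of the sequence), and a single value in $\{1,\dots,u\}$ would in any case give a factor of at most $u$, not a multiset count of size $\ell$. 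So the step you deferred (``once this decoupling is settled'') cannot be settled as stated; the claimed bijection onto the four blocks you list does not exist with $\mathbf{y}$ fixed.

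The repair is to follow the paper's accounting: either treat the first covered $y$-values as part of the enumeration (choose the weakly increasing tuple $\mathbf{y}$ in $\binom{u+\ell-1}{\ell}$ ways, then for each segment choose $\beta_i$ among the $2\varepsilon+1$ integers in $[y_i-\varepsilon,y_i+\varepsilon]$, then choose $y'_i \in [y_i,y_{i+1}]$ for $i<\ell$ with $y'_\ell=u$, and finally $\gamma_i \in [y'_i-\varepsilon,y'_i+\varepsilon]$), or else restate the claim without the binomial factor when $\mathbf{y}$ is genuinely fixed. Your blocks (i), (iii) and (iv) are fine and match the paper (compositions of $n$ into $\ell$ parts of size at least $2$ give $\binom{n-\ell-1}{\ell-1}$, and the two families of offsets give $(2\varepsilon+1)^{2\ell}$), and your count of $y'_i$ for $i<\ell$ is also the paper's; only the origin of $\binom{u+\ell-1}{\ell}$ needs to be corrected, together with an explicit statement of whether $\mathbf{y}$ is being counted or conditioned on.
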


%\begin{theorem}\label{th:compr_setting}
%    Let $\mathcal{C}(\ell,\varepsilon,u,n)$ be the set of PLAs of $\ell$ segments having maximum error $\varepsilon$ and covering a monotone sequence of $n$ elements drawn from a universe of size $u$ in the compression setting, then:
%    \begin{equation*}
%        \lvert \mathcal{C}(\ell,\varepsilon,u,n) \rvert = \binom{n-\ell-1}{\ell-1}\binom{u+\ell-1}{\ell}\left(\prod_{i=1}^{\ell-1}(y_{i+1}-y_{i}+1)\right)(2\varepsilon+1)^{2\ell}
%    \end{equation*}
%    where $y_{1} \leq y_{2} \leq \dots \leq y_{\ell}$ is the monotonically increasing sequence of the first covered $y$-values of each segment.
%\end{theorem}

\begin{proof}
    Let us denote with $x_i$ the first position covered by the $i$-th segment, and let us start by constraining each $x_{i}$ in such a way as to obtain a valid PLA with error $\varepsilon$. Since the first segment always starts from the first position, it is $x_1 = 1$. Each segment covers at least two elements because there is always a segment connecting two points without incurring any error (i.e., the one connecting them), hence $x_{i+1}-x_{i} \geq 2$ (recall that in the compression setting the $x$-values are strictly increasing and contiguous). 
        
    To ease our counting, we introduce the following change of variables $\bar{x}_{i} = x_{i} - (i-1)$. In this way, $\bar{x}_{2} = x_2 - 1 \geq 2 $ and $\bar{x}_{\ell} = x_{\ell} - (\ell-1) \leq n - \ell$, since $x_{2} \geq 3$ and $x_{\ell} \leq n-1$. Additionally, $\bar{x}_{i+1} - \bar{x}_{i} \geq 1$ (since the $\bar{x}_{i}$s are distinct) and thus $x_{i+1} - x_{i} = \bar{x}_{i+1} - \bar{x}_{i} + 1 \geq 2$. Therefore, choosing the first position of each segment while simultaneously satisfying the previous constraint on the $x_{i}$s, is equivalent to choosing $\ell-1$ distinct values from $\{2, 3, \dots, n-\ell\}$. The total number of ways to perform this choice is then $\binom{n-\ell-1}{\ell-1}$.

    %Let $\beta_{i}$ and $y_{i}$ be respectively the intercept of the $i$-th segment and its first covered $y$-value, see Figure~\ref{fig:pla_compression}. We observe that choosing $\beta_{i}$ is equivalent to first selecting $y_{i}$ and then choosing a value inside the interval $[y_{i}-\varepsilon,y_{i}+\varepsilon]$ of size $2\varepsilon+1$. Then, we notice that in the compression setting, the input sequence can contain repeated values. As a result, consecutive segments can have the same first covered $y$-value (i.e., $y_{i+1} = y_{i}$). Therefore, choosing the first $y$-value covered by each one of the $\ell$ segments translates to selecting $\ell$ increasing elements with repetitions from $1,2,\dots, u$. Since each $\beta_{i}$ can then be any value of the interval $[y_{i}-\varepsilon,y_{i}+\varepsilon]$, the number of ways to select the $\ell$ intercepts is $\binom{u-\ell}{\ell}(2\varepsilon+1)^{\ell}$.

    Let $\beta_{i}$ and $y_{i}$ denote, respectively, the intercept of the $i$-th segment and its first covered $y$-value (see Figure~\ref{fig:pla_compression}). Choosing $\beta_{i}$ is equivalent to first selecting $y_{i}$ and then choosing a value within the interval $[y_{i}-\varepsilon,\,y_{i}+\varepsilon]$, which contains $2\varepsilon+1$ elements. In the compression setting, the input sequence may contain repeated values, so consecutive segments can share the same first covered $y$-value (i.e., $y_{i+1}=y_{i}$). Thus, choosing the first $y$-value of each of the $\ell$ segments amounts to selecting an increasing sequence of $\ell$ elements from $\{1,2,\dots,u\}$. The number of such choices is $\binom{u+\ell-1}{\ell}$. Since each $\beta_{i}$ can then be any value of the interval $[y_{i}-\varepsilon,\,y_{i}+\varepsilon]$, the total number of possible selections for the $\ell$ intercepts is $\binom{u+\ell-1}{\ell}(2\varepsilon+1)^{\ell}$.
    
    %Then, we notice that the difference between $y$-values of consecutive segments (e.g., $y_{i+1}$ and $y_{i}$) must differ by at least $2$, since each segment covers at least two positions and the sequence is monotonically increasing. Therefore, using the same argument of the previous paragraph, choosing the first $y$-value covered by each one of the $\ell$ segments translates to selecting $\ell$ distinct elements from $1,2,\dots, u-\ell$. Since each $\beta_{i}$ can then be any value of the interval $[y_{i}-\varepsilon,y_{i}+\varepsilon]$, the number of ways to select the $\ell$ intercepts is $\binom{u-\ell}{\ell}(2\varepsilon+1)^{\ell}$.

    Lastly, let $\gamma_{i}$ be the $y$-value given by the $i$-th segment evaluated on its last covered position (see Figure~\ref{fig:pla_compression}). We perform the same reasoning as for the intercepts. Let $y'_{i}$ be the value of the sequence on the last position covered by the $i$-th segment. Choosing $\gamma_{i}$ is equivalent to first selecting $y'_{i}$ and then choosing any of the values inside $[y'_{i}-\varepsilon,y'_{i}+\varepsilon]$. Now, since we are dealing with monotonically increasing sequences, to obtain a valid PLA, we need that $y'_{i} \geq y_{i}$ and $y'_{i} \leq y_{i+1}$ for every $1\leq i < \ell$, notice that $y'_{\ell} = u$. Consequently, each $y'_i$ can be independently chosen from the interval $[y_i, y_{i+1}]$, and each $\gamma_{i}$ is then selected inside the interval $[y'_{i}-\varepsilon,y'_{i}+\varepsilon]$, therefore, the total number of ways to choose the $\ell$ last $y$-values of each segment is $\left(\prod_{i=1}^{\ell-1}(y_{i+1}-y_{i}+1)\right)(2\varepsilon+1)^{\ell}$.
    
\noindent Combining these quantities, our claim follows.
\end{proof}

Theorem~\ref{th:compr_setting} suggests the following lower bound on the number of bits that any lossless compressor for PLAs in the compression setting must output.

\begin{corollary}\label{cor:compr_setting}
    The minimum number of bits to represent any PLA from $\mathcal{C}(\ell, \varepsilon, u, n, \mathbf{y})$ is:
    \begin{align*}
        \mathcal{B}_{\mathcal{C}}(\ell, \varepsilon, u, n, \mathbf{y}) &= \log{\binom{n-\ell-1}{\ell-1}} + \log{\binom{u+\ell-1}{\ell}}\\ &+ \sum_{i=1}^{\ell-1}\log{(y_{i+1}-y_{i}+1)} + 2\ell\log{(2\varepsilon+1)}
    \end{align*}
    %where $\mathbf{y} = y_1 \leq y_2 \leq \dots \leq y_{\ell}$ is the monotonically increasing sequence of first covered $y$-values.
\end{corollary}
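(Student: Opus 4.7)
The plan is to invoke a textbook information-theoretic counting argument. Since $\mathcal{C}(\ell,\varepsilon,u,n,\mathbf{y})$ is a finite set whose elements are pairwise distinct PLAs, any lossless encoding that assigns a binary string to each PLA must, by the pigeonhole principle, use at least $\log |\mathcal{C}(\ell,\varepsilon,u,n,\mathbf{y})|$ bits on the worst-case input in the set, since otherwise two distinct PLAs would receive the same codeword, contradicting losslessness. This reduces the proof of the corollary to evaluating $\log |\mathcal{C}(\ell,\varepsilon,u,n,\mathbf{y})|$ in closed form.

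From here I would substitute the cardinality formula provided by Theorem~\ref{th:compr_setting} and apply the elementary identities $\log(AB) = \log A + \log B$ and $\log(A^{k}) = k \log A$ to distribute the logarithm across the four factors. The two binomial coefficients yield the terms $\log\binom{n-\ell-1}{\ell-1}$ and $\log\binom{u+\ell-1}{\ell}$; the product $\prod_{i=1}^{\ell-1}(y_{i+1}-y_{i}+1)$ becomes the sum $\sum_{i=1}^{\ell-1}\log(y_{i+1}-y_{i}+1)$; and $(2\varepsilon+1)^{2\ell}$ becomes $2\ell\log(2\varepsilon+1)$. Summing these four contributions reproduces exactly the expression $\mathcal{B}_{\mathcal{C}}(\ell,\varepsilon,u,n,\mathbf{y})$ stated in the claim.

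There is no real obstacle here: once Theorem~\ref{th:compr_setting} is in hand the argument is purely mechanical, and the corollary is essentially a reformulation of that theorem in bit-complexity language. The only conceptual point worth emphasizing is that the bound is a worst-case guarantee over the class $\mathcal{C}(\ell,\varepsilon,u,n,\mathbf{y})$ with the vector $\mathbf{y}$ held fixed, so any compressor that hopes to beat it must exploit structure beyond the parameters $(\ell,\varepsilon,u,n,\mathbf{y})$ that define the class.
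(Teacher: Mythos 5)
Your proposal is correct and matches the paper's (implicit) argument exactly: the paper derives Corollary~\ref{cor:compr_setting} from Theorem~\ref{th:compr_setting} by the same standard information-theoretic step of taking the logarithm of the cardinality $\lvert \mathcal{C}(\ell,\varepsilon,u,n,\mathbf{y}) \rvert$ and distributing it over the factors. Your added remark that the bound is worst-case over the class with $\mathbf{y}$ fixed is consistent with how the paper uses the measure.
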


Next, we focus on PLAs arising in the indexing setting. Due to space limitations, we present only the main result here and defer the full proof to Appendix~\ref{appendix:proof_th_indexing}. We only mention that the proof follows the same strategy as that of Theorem~\ref{th:compr_setting}, with the only difference that in the indexing setting, each segment (except possibly the last one) covers at least $2\varepsilon$ keys (see~\cite[Lemma 2]{FerraginaV20}), thus reducing the possible choices for the first covered $x$- and $y$-values.

\begin{theorem}\label{th:indexing_setting}
    Let $\mathcal{I}(\ell,\varepsilon,u,n,\mathbf{x})$ be the set of PLAs of $\ell$ segments having maximum error $\varepsilon$, covering a monotone sequence of $n$ elements drawn from a universe of size $u$ in the indexing setting, and having $\mathbf{x} = x_1 < x_2 < \dots < x_{\ell}$ as the first covered $x$-values. Then:
    \begin{align*}
        \lvert \mathcal{I}(\ell,\varepsilon,u,n,\mathbf{x}) \rvert = &\binom{u-\ell(2\varepsilon-1)}{\ell}\binom{n-\ell(2\varepsilon-1)-1}{\ell-1}\left(\prod_{i=1}^{\ell-1}(x_{i+1}-x_{i}-1)\right)(2\varepsilon+1)^{2\ell}
    \end{align*}
\end{theorem}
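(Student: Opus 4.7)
The plan is to mimic the proof of Theorem~\ref{th:compr_setting}, counting separately the first and last covered coordinates of each segment together with the intercept and end value, but with the crucial adjustment coming from Lemma~2 of~\cite{FerraginaV20}: in the indexing setting every segment covers at least $2\varepsilon$ points. This spacing constraint propagates to both the first covered positions $y_i$ and the first covered values $x_i$, and is exactly what deforms the two binomial coefficients of the compression formula into $\binom{n-\ell(2\varepsilon-1)-1}{\ell-1}$ and $\binom{u-\ell(2\varepsilon-1)}{\ell}$.

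First I would enumerate the admissible tuples $y_1 < y_2 < \cdots < y_\ell$ with $y_1 = 1$, $y_{i+1}-y_i \geq 2\varepsilon$, and $y_\ell \leq n - 2\varepsilon + 1$ so that the last segment also fits at least $2\varepsilon$ points. Setting $\bar{y}_i := y_i - (i-1)(2\varepsilon-1)$ turns the gap-$2\varepsilon$ inequalities into the strict-increase condition $1 = \bar{y}_1 < \bar{y}_2 < \cdots < \bar{y}_\ell \leq n - \ell(2\varepsilon-1)$, so the number of admissible tuples equals $\binom{n-\ell(2\varepsilon-1)-1}{\ell-1}$.

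Next I would count the first covered $x$-values. Since the underlying sequence is strictly increasing in $x$ and segment $i$ contains $y_{i+1} - y_i \geq 2\varepsilon$ distinct integer $x$-values inside $[x_i, x_i']$ (where $x_i'$ is its last covered $x$-value), we obtain $x_i' \geq x_i + 2\varepsilon - 1$ and hence $x_{i+1} > x_i' \geq x_i + 2\varepsilon - 1$, i.e.\ $x_{i+1} - x_i \geq 2\varepsilon$, with $x_\ell \leq u - 2\varepsilon + 1$. The analogous change of variables $\bar{x}_i := x_i - (i-1)(2\varepsilon-1)$ yields $\binom{u-\ell(2\varepsilon-1)}{\ell}$ choices for $\mathbf{x}$. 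Given each $y_i$, the intercept $\beta_i$ lies in $[y_i - \varepsilon, y_i + \varepsilon]$ and contributes a factor $(2\varepsilon+1)^\ell$.

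Finally I would count the last covered $x$-values $x_i'$ and the end values $\gamma_i$. For $i < \ell$, the constraint $x_i < x_i' < x_{i+1}$ gives exactly $x_{i+1} - x_i - 1$ choices, while $x_\ell' = u$ is forced by the same argument used in Theorem~\ref{th:compr_setting}, so the product over $i < \ell$ is $\prod_{i=1}^{\ell-1}(x_{i+1}-x_i-1)$. Each $\gamma_i$ then ranges freely in an interval of size $2\varepsilon+1$ around the last covered $y$-value $y_{i+1}-1$ (or $n$ for $i=\ell$), multiplying in another $(2\varepsilon+1)^\ell$. Combining the four factors produces the claimed identity. The main subtlety lies in transferring the per-segment lower bound of $2\varepsilon$ points onto both coordinate axes simultaneously; only once this is done does the change of variables deliver both shifted binomials in clean closed form.
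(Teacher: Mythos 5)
Your proposal is correct and follows essentially the same route as the paper's proof: the same change of variables $\bar{x}_i = x_i-(i-1)(2\varepsilon-1)$ and $\bar{y}_i = y_i-(i-1)(2\varepsilon-1)$ yielding the two shifted binomials, the same independent choice of $x'_i \in (x_i,x_{i+1})$ giving the product $\prod_{i=1}^{\ell-1}(x_{i+1}-x_i-1)$, and the same two $(2\varepsilon+1)^{\ell}$ factors for the intercepts $\beta_i$ around $y_i$ and the end values $\gamma_i$ around $y_{i+1}-1$. The only cosmetic difference is that you rederive $x_{i+1}-x_i\geq 2\varepsilon$ from the $\geq 2\varepsilon$ covered positions and the distinctness of the integer $x$-values, whereas the paper invokes Lemma~2 of~\cite{FerraginaV20} directly for both axes.
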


Theorem~\ref{th:indexing_setting} gives the following lower bound on the number of bits that any lossless compressor has to produce when applied to PLAs in the indexing setting.

\begin{corollary}\label{cor:indexing_setting}
    The minimum number of bits to represent any PLA from $\mathcal{I}(\ell, \varepsilon, u, n, \mathbf{x})$ is:
    \begin{align*}
        \mathcal{B}_{\mathcal{I}}(\ell, \varepsilon, u, n, \mathbf{x}) &= \log{\binom{u-\ell(2\varepsilon-1)}{\ell}} +\log{\binom{n-\ell(2\varepsilon-1)-1}{\ell-1}} +\\ &+ \sum_{i=1}^{\ell-1}\log{(x_{i+1}-x_{i}-1)} + 2\ell\log{(2\varepsilon+1)}
    \end{align*}
\end{corollary}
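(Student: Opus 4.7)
The plan is to mirror the counting argument of Theorem~\ref{th:compr_setting}, swapping the roles of the two axes: in the indexing setting the $y$-values are consecutive integers in $\{1,\dots,n\}$ while the $x$-values are strictly increasing and drawn from $\{1,\dots,u\}$. The new structural ingredient, which I would invoke straight from Lemma~2 of~\cite{FerraginaV20}, is that every segment (except possibly the last) covers at least $2\varepsilon$ keys, giving both $y_{i+1}-y_i \geq 2\varepsilon$ and $x_{i+1}-x_i \geq 2\varepsilon$ for $i<\ell$; this is what causes the $(2\varepsilon-1)$-shifts to appear inside the two binomial coefficients.

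First, I would list the parameters whose joint choice pins down a PLA in $\mathcal{I}$: for every segment $i=1,\dots,\ell$, the first covered $y$-value $y_i$, the first covered $x$-value $x_i$, the last covered $x$-value $x'_i$, the integer intercept $\beta_i$, and the integer ordinate $\gamma_i$ of the segment at $x'_i$. Because the $y$-coordinates are consecutive, the last covered $y$-value of segment $i$ is implicitly $y_{i+1}-1$ (and $n$ for $i=\ell$), so no separate ``$y'_i$'' needs to be selected; this is precisely the asymmetry that kills off one product term compared with Theorem~\ref{th:compr_setting}.

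Next I would count each family separately. For the $y_i$'s one has $y_1=1$, $y_\ell \leq n-2\varepsilon+1$, and gaps $\geq 2\varepsilon$; the substitution $\bar{y}_i = y_i-(i-1)(2\varepsilon-1)$ converts these gap constraints into strict inequalities, reducing the problem to choosing $\ell-1$ distinct integers from a range of size $n-\ell(2\varepsilon-1)-1$ and yielding $\binom{n-\ell(2\varepsilon-1)-1}{\ell-1}$. The symmetric substitution $\bar{x}_i = x_i-(i-1)(2\varepsilon-1)$ handles the $x_i$'s, but now the starting abscissa is not pinned to $1$, so all $\ell$ values are selected simultaneously, giving $\binom{u-\ell(2\varepsilon-1)}{\ell}$. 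The intercept $\beta_i \in [y_i-\varepsilon, y_i+\varepsilon]$ and the endpoint $\gamma_i \in [y'_i-\varepsilon, y'_i+\varepsilon]$ each independently contribute $2\varepsilon+1$ choices per segment, assembling into the global factor $(2\varepsilon+1)^{2\ell}$.

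The step I expect to be the main obstacle is the count of the $x'_i$'s, which must produce the product $\prod_{i=1}^{\ell-1}(x_{i+1}-x_i-1)$. Since $x$-values are strictly increasing and segments are non-overlapping, $x'_i$ must lie strictly between $x_i$ and $x_{i+1}$, giving $x_{i+1}-x_i-1$ integer choices for each $i<\ell$; for $i=\ell$ no factor appears because $x'_\ell=u$ is forced by the universe boundary, exactly analogously to the ``$y'_\ell=u$'' convention used in Theorem~\ref{th:compr_setting}. The delicate part is to check that once $x_{i+1}-x_i \geq 2\varepsilon$ has been enforced in the preceding step, every value of $x'_i$ in $(x_i, x_{i+1})$ is indeed attainable by some underlying sequence respecting the $2\varepsilon$-coverage constraint, so these choices truly correspond to distinct PLAs. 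Multiplying the four factors by the counting principle then yields the claimed cardinality.
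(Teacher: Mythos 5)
Your proposal is correct and follows essentially the same route as the paper: it mirrors the counting of Theorem~\ref{th:compr_setting}, uses Lemma~2 of~\cite{FerraginaV20} to get the $2\varepsilon$-gap constraints on both the $x_i$'s and $y_i$'s, applies the same $(2\varepsilon-1)$-shift change of variables to obtain the two binomial coefficients, counts $x'_i \in (x_i, x_{i+1})$ for the product term with $x'_\ell = u$ fixed, and assigns $(2\varepsilon+1)$ choices each to $\beta_i$ and $\gamma_i$ (the latter around the implicit last covered $y$-value $y_{i+1}-1$), after which the stated bound is the logarithm of this count. This matches the paper's proof in Appendix~\ref{appendix:proof_th_indexing} step for step.
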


As already mentioned, unlike in the compression setting, segments always cover at least $2\varepsilon$ keys (except possibly the last one). Therefore, the lower bound of Corollary~\ref{cor:indexing_setting} can be made independent from the choice of the first covered $x$-values, by noting that $x_{i+1}-x_{i} \geq 2\varepsilon$. Nevertheless, this more general form of the lower bound comes at the cost of underestimating the minimum number of bits required to store any PLA in the indexing case when segments span more than $2\varepsilon$ elements.

Having established lower bounds on the space required to represent lossless any PLA in both the compression and indexing settings, we now shift our focus to designing data structures that use space close to $\mathcal{B}_{\mathcal{C}}(\ell, \varepsilon, u, n, \mathbf{y})$ and $\mathcal{B}_{\mathcal{I}}(\ell, \varepsilon, u, n, \mathbf{x})$ while efficiently supporting the core operation of any learned data structure hinging on PLAs (e.g.,~\cite{FerraginaV20, BoffaFV22, VaidyaKCKMI22, FerraginaFMV23, GuerraVBF25, FerraginaL25}), namely, returning the $y$-value \emph{predicted} by the segment covering a given $x$-value.

\section{Succinct Data Structures for PLAs}\label{sec:succ_ds_plas}

We begin by establishing constraints on the parameters of a PLA that are commonly observed in many practical applications. Specifically, we assume that $\epsilon = O(1)$, $\ell = o(n)$ and $n = o(u)$. Next, we proceed by formalizing the core operation performed on any PLA in both the compression and indexing settings. In particular, we define the $predict(x)$ operation, which, given a value lying on the $x$-axis, identifies the segment covering $x$ and computes the corresponding $y$-value \emph{predicted} by that segment. Designing data structures that efficiently support this operation while achieving space usage close to the lower bounds of Section~\ref{sec:compr_measures_plas} is of both practical and theoretical interest since it can help to understand what the limits of PLA-based learned data structures are in terms of the redundancy needed to store and query the PLA itself, and possibly improve the performances of such data structures.

At the end of Section~\ref{sec:background}, we commented on the relation between the current literature and our lower bounds, highlighting that the storage schemes for PLAs of the LA-vector and the PGM-index are {\em compact} \cite{Navarro16}. Due to space constraints, we sketch the main result here; please refer to the appendix for the complete proofs. Starting with the LA-vector (see Section~\ref{subsec:space_close_compr}), it turns out that the storage space of Lemma~\ref{lem:la_space_time} can be expressed as $\mathcal{B}_{\mathcal{C}}(\ell,\varepsilon,u,n, \mathbf{y})$ with an additive term that is either $O(\ell\log{\frac{u}{\ell}})$ or $O(\ell\log{\frac{u}{\ell}} + \frac{n}{\log^{c}{n}})$ depending if queries are answered with a binary search or with the data structure of P\u{a}tra\c{s}cu \cite{Patrascu08}. In the first case, the overhead is always $O(\mathcal{B}_{\mathcal{C}}(\ell,\varepsilon,u,n, \mathbf{y}))$, while in the second case, that bound holds only when $\ell = \omega(n/(\log{u}\log^{c}{n}))$. As a result, the storage space of the LA-vector is asymptotically optimal up to a constant factor, and thus is compact. Moving to the PGM-index (see Appendix~\ref{appendix:bound_pgm}), the additive term on top of the lower bound $\mathcal{B}_{\mathcal{I}}(\ell,\varepsilon,u,n,\mathbf{x})$ is either $\Theta(\ell\log{u})$ or $O(\ell\log{u} + \frac{u}{\log^{c}{u}})$, depending on how queries are answered. In the first case, the overhead is always $\Theta(\mathcal{B}_{\mathcal{I}}(\ell,\varepsilon,u,n,\mathbf{x}))$, while in the second case that bound holds only when $\ell = \Omega(u/\log^{c+1}{u})$. Therefore, the storage space of the PGM-index is also compact.

Motivated by this, the following sections will describe the design of the first data structures for the compression and indexing settings, achieving small additive terms to the lower bounds of Section~\ref{sec:compr_measures_plas}, which turn out to be succinct in most practical cases, and most importantly provide the same asymptotic query time of the LA-vector and the PGM-index.

%At the end of Section \ref{sec:background} we commented on the relation between the current literature and our lower bounds, and we pointed out that the LA-vector's space-bound is asymptotically optimal whereas PGM-index's is not. We defer the discussion about the asymptotical optimality of LA-vector to the full paper, here we concentrate on PGM-index's limitations.

%Let us consider the space occupancy claimed for the PGM index in Lemma \ref{lem:pgm_space_time}. We can rewrite it as a binomial coefficient as follows:

%$$1.92 \, \ell + \ell \, \left(\log \binom{n}{\ell} + \log n + \log u + o(1) \right). $$

%This upper bound differs from the lower bound of Theorem \ref{th:indexing_setting} because it reports an additive term $1.92 \ell + \ell \log (un) + o(\ell)$ whereas the lower bound reports an additive term $\ell \log \frac{un}{\ell^2} + 2\ell\log{(2\varepsilon+1)}$, where we have asymptotically bounded the summation $\sum_{i=1}^{\ell-1}\log{(x_{i+1}-x_{i}-1)}$ with $\ell \log \frac{n}{\ell}$, and recall that $\varepsilon$ is a positive constant. This means that the PGM index's space occupancy in bits differs from the lower bound of about $\Omega(\ell \log \ell)$ bits.

%The following sections will describe the design of the first data structures for the compression and indexing settings that result provably pointwise within lower-order terms from the lower bounds.

\subsection{Using space close to $\mathcal{B}_{\mathcal{C}}(\ell,\varepsilon,u,n, \mathbf{y})$}\label{subsec:space_close_compr}

%Given a PLA $\{s_{1},s_{2},\dots,s_{\ell}\} \in \mathcal{C}(\ell, \varepsilon, u, n, \mathbf{y})$, we represent a generic segment $s_{i}$ as a tuple $s_{i} = (x_{i}, \beta_{i}, \gamma_{i}, y_{i}, y'_{i})$ where $x_{i}$ is its first covered $x$-value of the input sequence, $\beta_{i}$ is the intercept, $\gamma_{i}$ is the last $y$-value given by the segment, $y_{i}$ and $y'_{i}$ are respectively the first and last $y$-values of the input sequence covered by the $i$-th segment (see Figure~\ref{fig:pla_compression}). To encode each segment, we first store two bit sequences, $X$ and $Y$, which respectively represent the first $x$-value and the first $y$-value covered by each segment. The sequence $X$ encodes, in unary, the deltas between the $x$-values of consecutive segments (i.e., $x_{i+1}$ and $x_i$), shifted by a constant to further reduce space usage. This representation enables constant time access to any value via a simple $select_1$ operation. An identical strategy is applied to the sequence $Y$, which encodes the deltas between successive $y$-values. Such sequences are detailed as follows.

Given a PLA $\{s_{1},s_{2},\dots,s_{\ell}\} \in \mathcal{C}(\ell, \varepsilon, u, n, \mathbf{y})$, we represent a generic segment $s_{i}$ as a tuple $s_{i} = (x_{i}, \beta_{i}, \gamma_{i}, y_{i}, y'_{i})$ where $x_{i}$ is the first covered $x$-value of the input sequence, $\beta_{i}$ is the intercept, $\gamma_{i}$ is the last $y$-value given by the segment, $y_{i}$ and $y'_{i}$ are respectively the first and last $y$-values of the input sequence covered by the $i$-th segment (see Figure~\ref{fig:pla_compression}). To encode each segment, we begin by storing a bit sequence $X$ containing the first $x$-value covered by each segment. Specifically, $X$ encodes in unary the deltas between the first $x$-values of consecutive segments (i.e., between $x_{i+1}$ and $x_{i}$), shifted by a constant to further reduce space usage. Such a sequence is defined as follows:

\begin{itemize}
    \item $X = 0^{x_{2}-x_{1}-2}10^{x_{3}-x_{2}-2}1\dots 0^{x_{\ell}-x_{\ell-1}-2}1$. The sequence is well-defined since $x_{i+1}-x_{i} \geq 2$. The length of $X$ is $\sum_{i=1}^{\ell-1}(x_{i+1}-x_{i}-1) = x_{\ell} - x_{1} - (\ell-1)$, which is upper bounded by $n-\ell-1$ since $x_{1} = 1$ and $x_{\ell} \leq n-1$, and contains $\ell-1$ ones. The $i$-th element of $X$ can be accessed by noticing that $x_{1} = 1$ and for any $i > 1$, $x_{i} = select_{1}(X,i-1)+i$.
    %\item $Y = 0^{y_{1}-1}10^{y_{2}-y_{1}-2}1\dots 0^{y_{\ell}-y_{\ell-1}-2}1$. The sequence is well-defined since $y_{1} \geq 1$ and $y_{i+1}-y_{i} \geq 2$. The length of $Y$ is $y_{1} + \sum_{i=1}^{\ell-1}(y_{i+1}-y_{i}-1) = y_{\ell} - (\ell - 1) \leq u - \ell$ since $y_{\ell} \leq u-1$, and contains $\ell$ ones. The $i$-th element of $Y$ is accessed as $y_{1} = select_{1}(Y,1)$ and for any $i > 1$, $y_{i} = select_{1}(Y,i)+(i-1)$.
\end{itemize}

Next, we store an integer sequence $Y$ encoding the first covered $y$-values of each segment. Given our assumption that $\ell = o(n)$, both the $X$ and $Y$ sequences are sparse. Since they are also monotonically increasing by construction, we represent them using the Elias–Fano encoding scheme\footnote{We turn $X$ into an integer sequence by interpreting each unary code as an integer.}. By Theorem~\ref{th:elias_fano}, the space in bits required to store $X$ and $Y$ is:

\begin{equation}\label{eq:ef_space1}
    (\ell-1) \log{\left(\frac{n-\ell-1}{\ell-1}\right)} + \ell \log{\left(\frac{u}{\ell}\right)} + 4\ell + o(\ell)
\end{equation}

Noting that $\log{\binom{u+\ell-1}{\ell}} = \log{\binom{u}{\ell}} + O(\ell)$ (since $\binom{u+\ell-1}{\ell} = \binom{u}{\ell} \cdot \left(\prod_{i=1}^{\ell-1}(u+i)/\prod_{i=1}^{\ell-1}(u-i)\right)$ and the second factor can be bounded by $c^{\ell-1}$ for some constant $c>1$) and applying the Elias-Fano space bound from Equation~\ref{eq:ef_entropy}, we can rewrite Equation~\ref{eq:ef_space1} in the following form:

\begin{equation}\label{eq:ef_space2}
    \log{\binom{n-\ell-1}{\ell-1}} + \log{\binom{u+\ell-1}{\ell}} + O(\ell+\log{u})
\end{equation}

Next we consider the sequence of last covered $y$-values of each segment, i.e. $y'_{1},y'_{2},\dots,y'_{\ell-1}$, omitting $y'_{\ell}$ as it is always equal to $u$. Since each $y'_{i} \in [y_{i},y_{i+1}]$, we store the relative values using just $\log{(y_{i+1}-y_{i}+1)}$ bits and concatenate their representation in a single bit vector $B$, of size $|B| = \sum_{i=1}^{\ell-1}\log{(y_{i+1}-y_{i}+1)}$ bits. To access each $y'_{i}$, we maintain an auxiliary array $P$, where each $p_{i}$ is the starting index of the binary representation of $y'_{i}$ inside $B$. Notice that this technique resembles the \emph{dense pointers} method of Ferragina \& Venturini~\cite{FerraginaV07}. Array $P$ is both sparse and monotonically increasing, hence we store it with the Elias-Fano encoding scheme using the following number of bits (recall Theorem~\ref{th:elias_fano}):

\begin{equation}\label{eq:space_p}
    2(\ell-1) + (\ell-1)\log{\left(\frac{\sum_{i=1}^{\ell-1}\log{(y_{i+1}-y_{i}+1)}}{\ell-1}\right)} + o(\ell) 
\end{equation}

Using Jensen's inequality, and noticing that $\sum_{i=1}^{\ell-1}(y_{i+1}-y_{i}+1) = y_{\ell}+\ell-1$ which is at most $u+\ell-1$ since $y_{\ell} \leq u$, we obtain that $|P| = O(\ell\log{\log{\frac{u}{\ell}}})$ bits. Therefore, the sequence of the last covered $y$-values is encoded as the bit vector $B$ and the Elias-Fano encoding of the array $P$, using the following number of bits:

\begin{equation}\label{eq:space_last_y}
    \left(\sum_{i=1}^{\ell-1}\log{(y_{i+1}-y_{i}+1)}\right) + O(\ell\log{\log{\frac{u}{\ell}}})
\end{equation}

Lastly, since each segment is an $\varepsilon$-approximation for the points it covers, it holds that $|\beta_{i}-y_{i}| \leq 2\varepsilon$ and $|\gamma_{i}-y'_{i}| \leq 2\varepsilon$. Therefore we store each intercept $\beta_{i}$ and each last $y$-value $\gamma_{i}$ as deltas with respect to $y_{i}$ and $y'_{i}$ inside two arrays $\Delta_{\beta}$ and $\Delta_{\gamma}$ using $2\ell\log{(2\varepsilon+1)}$ bits. Combining this with Equations~\ref{eq:ef_space2} and~\ref{eq:space_last_y}, and recalling Corollary~\ref{cor:compr_setting}, the space usage of our storage scheme is $\mathcal{B}_{\mathcal{C}}(\ell,\varepsilon,u,n, \mathbf{y}) + O(\ell\log{\log{\frac{u}{\ell}}}+\log{u})$ bits. That is succinct space, since $O(\ell\log{\log{\frac{u}{\ell}}}+\log{u}) \subseteq o(\mathcal{B}_{\mathcal{C}}(\ell,\varepsilon,u,n, \mathbf{y}))$ because $\mathcal{B}_{\mathcal{C}}(\ell,\varepsilon,u,n, \mathbf{y})$ includes a term that is $\Theta(\ell \log{\frac{u}{\ell}})$, and we consider the case that $\ell = \Omega(\log u/\log\log u)$. 

Without introducing any extra space on top of this representation, performing a $predict(x)$ query proceeds as detailed in Algorithm~\ref{alg:predict_compr}. For simplicity, we assume that the segment covering $x$ is never the first one nor the last one. Corner cases can be handled in constant time without altering the query time.

\begin{algorithm}
\SetKwInOut{Input}{Input}
\SetKwInOut{Output}{Output}
\caption{Performs a $predict$ operation on our compressed storage scheme for PLAs.}\label{alg:predict_compr}
\Input{The arrays of first covered $x$-values and $y$-values $X$ and $Y$, the bit-vector $B$, the array of offsets $P$, the array of deltas $\Delta_{\beta}$ and $\Delta_{\gamma}$, and a value $x$.}
\Output{A value $y$ such that $predict(x) = y$.}
\BlankLine
$i = pred(X,\, x)$ \tcp*{predecessor search on $X$ using a binary search}
$x_{i} = select_{1}(X,i-1)+i$\;
$x_{i+1} = select_{1}(X,i)+(i+1)$\;
$y_{i} = select_{1}(Y,\, i)$\;
$s_{i} = select_{1}(P,\,i)$\;
$e_{i} = select_{1}(P,\,i+1)-1$\;
$y'_{i} = read(B,\,s_{i},\, e_{i})$ \tcp*{read the bit sequence $B[s_{i},e_{i}]$ representing $y'_{i}$}
$\beta_{i} = y_{i} + \Delta_{\beta}[i]$\;
$\gamma_{i} = y'_{i} + \Delta_{\gamma}[i]$\;
\KwRet{$\left\lfloor (x-x_{i})\frac{(\gamma_{i}-\beta_{i})}{(x_{i+1}-x_{i})} \right\rfloor +  \beta_{i}$}\;
\end{algorithm}

Because of Theorem~\ref{th:elias_fano}, all the $select$ operations on the Elias-Fano encoded sequences $X$, $Y$, and $P$ take $O(1)$ time. Reading the binary representation of numbers from $B$, as well as accessing the deltas inside $\Delta_{\beta}$ and $\Delta_{\gamma}$, requires a $O(1)$ number of bitwise and bit-shifting operations (see~\cite[Chapter 3]{Navarro16} for the technical details), hence $O(1)$ time in the transdichotomous word RAM model we are assuming. The most expensive part of Algorithm~\ref{alg:predict_compr} is then the predecessor search on $X$, which dominates the cost of a $predict$ query, making the overall time $O(\log{\ell})$.   

To lower the cost of the predecessor search, and thus achieve faster query times, one must allow for a larger additive term to the lower bound $\mathcal{B}_{\mathcal{C}}(\ell,\varepsilon,u,n, \mathbf{y})$. To address this, we directly store the sequence of first covered $x$-values as $X = x_{2},x_{3},\dots,x_{\ell}$, whose universe is bounded by $n-1$ and contains $\ell-1$ values (recall $x_{1} = 1$ in the compression setting).

This sequence allows the predecessor of any given position (and, consequently, the index of the segment covering that position) to be computed via a simple $rank$ operation. $X$ is then stored using Theorem~\ref{th:succincter}, which guarantees that for any constant $c > 0$, both $rank$ and $select$ operations take $O(c)$ time while requiring $\log{\binom{n-1}{\ell-1}} + O\left(\frac{n}{\log^{c}{n}}\right)$ bits of space. 

Since we assumed that $\ell = o(n)$, we refactor the term $\log{\binom{n-1}{\ell-1}}$ as follows:

\begin{align*}
    \log{\binom{n-1}{\ell-1}} &= \log{\binom{n-\ell-1}{\ell-1}} + \log{\left(\frac{\prod_{i=1}^{\ell}(n-i)}{\prod_{i=\ell}^{2\ell-1}(n-i)}\right)}\\
    &\leq \log{\binom{n-\ell-1}{\ell-1}} + \ell\log{\left(\frac{n-1}{n-2\ell+1}\right)}\\
    &= \log{\binom{n-\ell-1}{\ell-1}} + O(\ell)
\end{align*}

Where the inequality is obtained by upper-bounding each term in the numerator by $n-1$ and lower-bounding each term in the denominator by $n-2\ell+1$. The space usage is then:

\begin{equation}\label{eq:compr_space_usage_pat}
    \mathcal{B}_{\mathcal{C}}(\ell,\varepsilon,u,n, \mathbf{y}) + O(\ell\log{\log{\frac{u}{\ell}}}+\frac{n}{\log^{c}n})
\end{equation}

We now show that the space usage of this second data structure is succinct. To this end, we use the well-known upper and lower bounds on the logarithm of the binomial (i.e., $k \log{\frac{m}{k}}\leq \log{\binom{m}{k}} \leq k\log{(\frac{em}{k})}$, where $e \approx 2.718$ is the Euler's number, and thus $\log{\binom{m}{k}} = \Theta(k \log{\frac{m}{k}})$), to rewrite the lower bound of Corollary~\ref{cor:compr_setting} as follows:

\begin{equation}\label{eq:compr_bound_theta}
    \mathcal{B}_{\mathcal{C}}(\ell, \varepsilon, u, n, \mathbf{y}) = \Theta(\ell(\log{\frac{u}{\ell}}+\log{\frac{n}{\ell}})) = \Theta(\ell\, \log{\frac{u}{\ell}})
\end{equation}

Where the $\Omega$-term comes from the first two additive terms in the formula of $\mathcal{B}_{\mathcal{C}}(\ell, \varepsilon, u, n, \mathbf{y})$ and the approximation of the logarithm of the binomial coefficient written above. The $O()$-term comes similarly and from upper bounding the summation in Corollary~\ref{cor:compr_setting} using Jensen's inequality, as already done in Section~\ref{subsec:space_close_compr}. 

Focusing on the additive term of Equation~\ref{eq:compr_space_usage_pat}, it is easy to see that $O(\ell\log{\log{\frac{u}{\ell}}}+\frac{n}{\log^{c}{n}}) \subseteq o(\ell\log{\frac{u}{\ell}} + \frac{n}{\log^{c}{n}})$, by considering $\ell = \omega(n/(\log{u} \, \log^{c}{n}))$ which implies that $\ell\log{\frac{u}{\ell}}  = \omega(\frac{n}{\log^{c}{n}})$. Therefore, from Equation~\ref{eq:compr_bound_theta}, it is $O(\ell\log{\log{\frac{u}{\ell}}}+\frac{n}{\log^{c}{n}}) \subseteq o(\mathcal{B}_{\mathcal{C}}(\ell, \varepsilon, u, n, \mathbf{y}))$ for those values of $\ell$ that are $\omega(n/(\log{u} \, \log^{c}{n}))$ and do not violate our initial assumption that $\ell = o(n)$.

As a result, we proved the following theorem, which gives data structures able to store any PLA in the compression setting using space close to the lower bound of Corollary~\ref{cor:compr_setting} while supporting efficient $predict$ queries.

\begin{theorem}\label{th:succ_ds_compr}
    There exist data structures storing any PLA from $\mathcal{C}(\ell, \varepsilon, u, n)$ as follows:
    \begin{itemize}
        \item Using $\mathcal{B}_{\mathcal{C}}(\ell,\varepsilon,u,n, \mathbf{y}) + O(\ell\log{\log{\frac{u}{\ell}}}+\log{u})$ bits of space, while supporting $predict$ in $O(\log{\ell})$ time.
        \item Using $\mathcal{B}_{\mathcal{C}}(\ell,\varepsilon,u,n, \mathbf{y}) + O(\ell\log{\log{\frac{u}{\ell}}} + n/\log^{c}{n})$ bits of space for any constant $c>0$, while supporting $predict$ in $O(c)$ time.
    \end{itemize}
\end{theorem}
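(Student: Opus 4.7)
The plan is to encode each segment $s_i=(x_i,\beta_i,\gamma_i,y_i,y'_i)$ by storing the five coordinate sequences separately and matching each of them to an encoding whose space cost lines up with one of the additive terms of $\mathcal{B}_\mathcal{C}(\ell,\varepsilon,u,n,\mathbf{y})$ appearing in Corollary~\ref{cor:compr_setting}. First I would represent the first covered $x$-values via a bit vector $X$ that concatenates the unary codes of the shifted gaps $x_{i+1}-x_i-2\geq 0$; reinterpreting $X$ as a monotone sequence and storing it with Elias-Fano (Theorem~\ref{th:elias_fano}) costs $\log\binom{n-\ell-1}{\ell-1}+O(\ell)$ bits. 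The first $y$-values $Y$ are also monotone and sparse, so I would store them via Elias-Fano as well, at cost $\log\binom{u+\ell-1}{\ell}+O(\ell+\log u)$ bits after using the identity $\log\binom{u+\ell-1}{\ell}=\log\binom{u}{\ell}+O(\ell)$ (which follows from bounding termwise the pochhammer ratio $\prod(u+i)/\prod(u-i)$ by $c^{\ell-1}$). For the last covered $y$-values $y'_i\in[y_i,y_{i+1}]$, I would pack each in its $\log(y_{i+1}-y_i+1)$-bit binary form into a bit vector $B$ and index the $\ell-1$ starting offsets by Elias-Fano encoding a monotone pointer array $P$; an application of Jensen's inequality to the average gap $(y_\ell+\ell-1)/(\ell-1)$ would bound $|P|$ by $O(\ell\log\log(u/\ell))$ bits. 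Finally, since $|\beta_i-y_i|\leq 2\varepsilon$ and $|\gamma_i-y'_i|\leq 2\varepsilon$, the two delta arrays $\Delta_\beta,\Delta_\gamma$ fit in $2\ell\log(2\varepsilon+1)$ bits. Summing these contributions and invoking Corollary~\ref{cor:compr_setting} yields the first space bound.

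For the $predict$ operation I would follow Algorithm~\ref{alg:predict_compr}: locate the covering index $i$ by a predecessor search on $X$, recover $x_i,x_{i+1},y_i$ via $O(1)$-time Elias-Fano $select$ calls, decode $y'_i$ in $O(1)$ by extracting the bits of $B$ delimited by two consecutive values of $P$, and read $\beta_i,\gamma_i$ from the delta arrays; the final linear evaluation at $x$ is a constant number of word-RAM operations. Since every step except the predecessor search on $X$ runs in $O(1)$, the total query cost is $O(\log\ell)$.

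For the faster variant I would replace the Elias-Fano encoding of $X$ with the raw sequence $x_2,\ldots,x_\ell$ of $\ell-1$ integers drawn from the universe $\{1,\ldots,n-1\}$, stored via Pătrașcu's Succincter (Theorem~\ref{th:succincter}) at cost $\log\binom{n-1}{\ell-1}+O(n/\log^c n)$ bits with $O(c)$-time $rank$ and $select$. The bound $\log\binom{n-1}{\ell-1}\leq \log\binom{n-\ell-1}{\ell-1}+O(\ell)$, obtained by upper-bounding each of the $\ell$ extra factors in the numerator by $n-1$ and lower-bounding each in the denominator by $n-2\ell+1$, keeps the leading term aligned with $\mathcal{B}_\mathcal{C}$. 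All other components of the data structure are reused verbatim, the predecessor step now becomes an $O(c)$-time $rank$ query on $X$, and every other line of Algorithm~\ref{alg:predict_compr} still runs in $O(1)$ time, giving the claimed $O(c)$-time $predict$.

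The main obstacle is the final sanity check that the overhead is genuinely $o(\mathcal{B}_\mathcal{C}(\ell,\varepsilon,u,n,\mathbf{y}))$ in the regime stipulated at the start of Section~\ref{sec:succ_ds_plas}. I would use the sandwich $k\log(m/k)\leq \log\binom{m}{k}\leq k\log(em/k)$, together with Jensen's inequality applied to the summation term in Corollary~\ref{cor:compr_setting}, to rewrite the lower bound as $\Theta(\ell\log(u/\ell))$. Under this reformulation, the $O(\ell\log\log(u/\ell))$ summand is trivially subdominant as soon as $\ell=\Omega(\log u/\log\log u)$, whereas the $O(n/\log^c n)$ summand becomes subdominant precisely when $\ell=\omega(n/(\log u\,\log^c n))$; combined with the running assumption $\ell=o(n)$, this delimits the succinctness regime claimed in the theorem.
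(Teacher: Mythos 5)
Your proposal matches the paper's own proof essentially step for step: the same decomposition into the sequences $X$, $Y$, $B$ with pointer array $P$, and the delta arrays $\Delta_{\beta},\Delta_{\gamma}$, the same Elias-Fano versus Succincter variants with the identical binomial refactorings, and the same final succinctness check via $\mathcal{B}_{\mathcal{C}}=\Theta(\ell\log\frac{u}{\ell})$ in the regimes $\ell=\Omega(\log u/\log\log u)$ and $\ell=\omega(n/(\log u\,\log^{c} n))$. It is correct and complete as written, so no further changes are needed.
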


We conclude by comparing Theorem~\ref{th:succ_ds_compr} with the storage scheme for PLAs of the LA-vector (see Lemma~\ref{lem:la_space_time}). We start by considering the one solving the $predict$ queries with a binary search on the first covered positions, reporting its space usage in bits here to facilitate comparison:

\begin{equation}\label{eq:la_vector_space_comparison}
    \ell(2\log{\frac{u}{\ell}} + \log{\frac{n}{\ell}} + 6 + 2\log{(2\varepsilon+1)} + o(1))
\end{equation}

We notice that the term $2\ell\log{(2\varepsilon+1)}$ is shared by both Corollary~\ref{cor:compr_setting} and Equation~\ref{eq:la_vector_space_comparison}, hence we only focus on the remaining parts. Since we assumed that $\ell = o(n)$, the terms $\ell \log{\frac{u}{\ell}}$ and $\ell \log{\frac{n}{\ell}}$ appearing in Equation~\ref{eq:la_vector_space_comparison} can be respectively bounded as $\log{\binom{u+\ell-1}{\ell}} + O(\ell + \log{u})$ and $\log{\binom{n-\ell-1}{\ell-1}} + O(\ell + \log{n})$ (see Section~\ref{sec:background} and~\ref{subsec:space_close_compr}). Consequently, Equation~\ref{eq:la_vector_space_comparison} can be rewritten in the following form, which is closer to the one of Corollary~\ref{cor:compr_setting}.

\begin{equation}\label{eq:la_vector_space_comparison2}
     \log{\binom{n-\ell-1}{\ell-1}} + \log{\binom{u+\ell-1}{\ell}} + \ell\log{\frac{u}{\ell}} + 2\ell\log{(2\varepsilon+1)} + O(\ell+\log{u})
\end{equation}

Lastly, the summation $\sum_{i=1}^{\ell-1}\log{(y_{i+1}-y_{i}+1)}$ appears in the lower bound of Corollary~\ref{cor:compr_setting}, but not in Equation~\ref{eq:la_vector_space_comparison2} which in turns contain also the additive terms $\ell\log{\frac{u}{\ell}} + O(\ell+\log{u})$.

Now, the minimum of the summation is $\ell-1$, and this occurs when each segment covers exactly two positions. We rule out the case in which it is zero, as this would force $y_{i+1} = y_{i}$, thus only considering PLAs formed by a single segment. In this way, the gap between the lower bound and the storage space of the LA-vector is thus $\Theta(\ell\log{\frac{u}{\ell}}) = \Theta(\mathcal{B}_{\mathcal{C}}(\ell, \varepsilon, u, n, \mathbf{y}))$. Hence, the storage scheme for PLAs of the LA-vector with the correction of Ferragina \& Lari is {\em compact}, because it is up to a constant factor from the optimal. Therefore, our first data structure in Theorem~\ref{th:succ_ds_compr} offers an even smaller additive term to the optimal space usage, while having the same asymptotic query time. Specifically, assuming $\ell = \Omega(\log{u}/\log{\log{u}})$, it exponentially reduces the overhead, from $O(\ell\log{\frac{u}{\ell}})$ to $O(\ell\log{\log{\frac{u}{\ell}}})$.

Similarly, an analogous result holds for the two data structures described in Lemma~\ref{lem:la_space_time} and Theorem~\ref{th:succ_ds_compr}, which answer queries in $O(c)$ time for any constant $c > 0$. In this case, it is sufficient to notice that term due to the data structure of P\u{a}tra\c{s}cu is shared by both, and the additive term of the LA-vector becomes $O(\ell\log{\frac{u}{\ell}}+\frac{n}{\log^{c}{n}})$. Therefore, using the same argument we used for proving the succinctness of the second data structure in Theorem~\ref{th:succ_ds_compr}, it is possible to show that for $\ell = \omega(n/\log{u}\log^{c}{n})$ the storage scheme of the LA-vector is again compact, while ours is succinct, and thus offering a lower overhead to the optimal space usage without altering its asymptotic query time.

%We conclude by comparing Theorem~\ref{th:succ_ds_compr} with the storage scheme for PLAs of the LA-vector (see Lemma~\ref{lem:la_space_time}). Because of space constraints, we briefly sketch the comparison here, focusing only on the data structures answering $predict$ queries with a binary search on the first covered positions. Please refer to Appendix~\ref{appendix:bound_la_vector} for a detailed analysis.

%As already mentioned, the storage scheme of the LA-vector is compact, meaning that is asymptotically optimal up to a constant factor from the lower bound of Corollary~\ref{cor:compr_setting} and can be expressed as $\mathcal{B}_{\mathcal{C}}(\ell,\varepsilon,u,n, \mathbf{y}) + O(\ell\log{\frac{u}{\ell}})$. Building on this, the first data structure of Theorem~\ref{th:succ_ds_compr} achieves a significant improvement by reducing the additive term exponentially, from $O(\ell\log{\frac{u}{\ell}})$ to $O(\ell\log{\log{\frac{u}{\ell}}})$, which is succinct space under the mild assumption that $\ell = \Omega(\log{u}/\log{\log{u}})$, all while preserving the same asymptotic query time.

In conclusion, we remark that, beyond its theoretical appeal, the first data structure of Theorem~\ref{th:succ_ds_compr} is also practical, as it leverages techniques with highly engineered implementations, such as the Elias-Fano encoding of monotone sequences~\cite{MaPRZ21} and arrays of fixed- and variable-size elements~\cite{Navarro16}. Indeed, under the same assumption that $\ell = \Omega(\log{u}/\log{\log{u}})$, our storage scheme asymptotically requires just $O(\log{\log{\frac{u}{\ell}}})$ bits per segment over the theoretical minimum, while still providing practically fast $predict$ queries. Conversely, the second data structure is mostly theoretical, and it aims at showing how close we can approach the lower bound of Corollary~\ref{cor:compr_setting} while still supporting (optimal) constant time $predict$ queries.

\subsection{Using space close to $\mathcal{B}_{\mathcal{I}}(\ell,\varepsilon,u,n,\mathbf{x})$}\label{subsec:space_close_indx}

The compression and indexing settings are closely related, the only difference lies in which axis is constrained to have consecutive values. Indeed, this similarity can also be observed in the two lower bounds of Corollary~\ref{cor:compr_setting} and~\ref{cor:indexing_setting} sharing the same form. As a result, the design of a data structure that achieves space usage close to the lower bound for the indexing setting, while supporting efficient $predict$ queries, builds upon the same ideas and algorithmic techniques introduced in Section~\ref{subsec:space_close_compr}. Due to space constraints, we present only the main result in this section, deferring the complete proof to Appendix~\ref{appendix:proof_indexing}.

\begin{theorem}\label{th:succ_ds_indx}
    There exist data structures storing any PLA from $\mathcal{I}(\ell, \varepsilon, u, n)$ as follows:
    \begin{itemize}
        \item Using $\mathcal{B}_{\mathcal{I}}(\ell,\varepsilon,u,n,\mathbf{x}) + O(\ell\log{\log{\frac{u}{\ell}}}+\log{u})$ bits of space, while supporting the $predict$ operation in $O(\log{\ell})$ time.
        \item Using $\mathcal{B}_{\mathcal{I}}(\ell,\varepsilon,u,n,\mathbf{x}) + O(\ell\log{\log{\frac{u}{\ell}}} + u/\log^{c}{u})$ bits of space for any constant $c>0$, while supporting the $predict$ operation in $O(c)$ time.
    \end{itemize}
\end{theorem}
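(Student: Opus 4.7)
\medskip
\noindent\textbf{Proof plan for Theorem~\ref{th:succ_ds_indx}.}
The plan is to exploit the symmetry with the compression setting and adapt the storage scheme of Section~\ref{subsec:space_close_compr} with the roles of the two axes swapped. Each segment is represented as $s_i = (x_i,\, x'_i,\, y_i,\, \beta_i,\, \gamma_i)$, where $x_i$ and $x'_i$ are the first and last covered $x$-values, $y_i$ is the first covered position, $\beta_i$ is the intercept, and $\gamma_i$ is the $y$-value predicted by $s_i$ at $x'_i$. Because the $y$-values are consecutive in this setting, the last covered position of $s_i$ equals $y_{i+1}-1$, so it need not be stored. Moreover, by~\cite[Lemma 2]{FerraginaV20} we have $y_{i+1}-y_i\geq 2\varepsilon$ and $x_{i+1}>x_i$, which drives the design of the individual components.

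The five data structures are: (i) a monotone sequence $Y$ with $y_{i+1}-y_i\geq 2\varepsilon$, which I would shift via $\bar y_i = y_i - (i-1)(2\varepsilon-1)$ to obtain a strictly increasing sequence over a universe of size $n-(\ell-1)(2\varepsilon-1)$, stored with Elias-Fano (Theorem~\ref{th:elias_fano}); (ii) the monotone sequence $X$ of first covered $x$-values over universe $u$, stored with Elias-Fano; (iii) a concatenated bit vector $B$ storing each $x'_i\in[x_i+1,x_{i+1}-1]$ in $\log(x_{i+1}-x_i-1)$ bits, together with an Elias-Fano-encoded pointer array $P$ giving the starting offset of each code, exactly as done for the $y'_i$'s in Section~\ref{subsec:space_close_compr}; (iv)--(v) two arrays $\Delta_\beta$ and $\Delta_\gamma$ of $\ell\log(2\varepsilon+1)$ bits each, storing $\beta_i-y_i$ and $\gamma_i-(y_{i+1}-1)$. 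Summing the space contributions and using $\log\binom{u}{\ell}=\log\binom{u-\ell(2\varepsilon-1)}{\ell}+O(\ell)$ (valid since $\varepsilon=O(1)$), Jensen's inequality on $|P|$ to get $O(\ell\log\log(u/\ell))$, and the refactoring from Equation~\ref{eq:ef_space1} to Equation~\ref{eq:ef_space2}, the total matches $\mathcal{B}_{\mathcal{I}}(\ell,\varepsilon,u,n,\mathbf{x}) + O(\ell\log\log(u/\ell) + \log u)$.

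The $predict(x)$ algorithm then mirrors Algorithm~\ref{alg:predict_compr}: first locate the segment via a predecessor search of $x$ on $X$; then recover $x_i$, $x_{i+1}$, $y_i=select(Y,i)$ (undoing the shift), $y_{i+1}-1$, the offsets $s_i=select(P,i)$ and $s_{i+1}$, the code $x'_i=\mathrm{read}(B,s_i,s_{i+1}-1)$, and finally $\beta_i=y_i+\Delta_\beta[i]$ and $\gamma_i=(y_{i+1}-1)+\Delta_\gamma[i]$, returning $\lfloor(x-x_i)(\gamma_i-\beta_i)/(x'_i-x_i)\rfloor+\beta_i$. All $select$ and $\mathrm{read}$ steps take $O(1)$ time, so the cost is dominated by the predecessor search on $X$. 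To obtain the first bullet, $X$ is kept in its Elias-Fano form and binary-searched in $O(\log\ell)$ time. For the second bullet, $X$ is replaced by the Succincter (Theorem~\ref{th:succincter}) over the universe of size $u$, which supports $rank$/$select$ in $O(c)$ time within $\log\binom{u}{\ell}+O(u/\log^c u)$ bits; absorbing the $O(\ell)$ discrepancy into the additive term yields the claimed $O(\ell\log\log(u/\ell)+u/\log^c u)$ overhead.

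The step that requires the most care is verifying succinctness, i.e.\ that both additive overheads are $o(\mathcal{B}_{\mathcal{I}})$. This reduces to establishing the analogue of Equation~\ref{eq:compr_bound_theta}, namely $\mathcal{B}_{\mathcal{I}}(\ell,\varepsilon,u,n,\mathbf{x}) = \Theta(\ell\log(u/\ell))$, by combining the standard $\Theta(k\log(m/k))$ estimate of the binomial logarithm (applied to the first two terms) with a Jensen-based upper bound on $\sum_i\log(x_{i+1}-x_i-1)$, exactly as in Section~\ref{subsec:space_close_compr}. Granted this, the first overhead is $o(\mathcal{B}_{\mathcal{I}})$ under the mild assumption $\ell=\Omega(\log u/\log\log u)$, while the second is $o(\mathcal{B}_{\mathcal{I}})$ for $\ell=\omega(u/\log^{c+1}u)$, matching the natural regimes in which the PGM-index is merely compact.
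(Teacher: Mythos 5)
Your proposal is correct and follows essentially the same route as the paper's own construction in Appendix~\ref{appendix:proof_indexing}: Elias--Fano encodings of the first covered $x$- and $y$-values (the paper uses the shifted-gap form, yours the unshifted one, which only costs an extra $O(\ell+\log u)$ absorbed by the additive term), the bit vector $B$ with Elias--Fano pointer array $P$ for the $x'_i\in[x_i+1,x_{i+1}-1]$, the $\Delta_\beta,\Delta_\gamma$ arrays, and the replacement of $X$ by P\u{a}tra\c{s}cu's structure for the $O(c)$-time variant. The only detail to patch is $\Delta_\gamma[\ell]$, since $y_{\ell+1}$ does not exist: either adopt the convention $y_{\ell+1}-1=n$ or, as the paper does, store $\gamma_\ell$ explicitly in $O(\log n)$ bits.
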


The storage scheme of the PGM-index is optimal up to a constant factor. Considering the first data structure of Lemma~\ref{lem:pgm_space_time}, it can be expressed as $\mathcal{B}_{\mathcal{I}}(\ell,\varepsilon,u,n,\mathbf{x}) + \Theta(\ell\log{u})$. As a result, Theorem~\ref{th:succ_ds_indx} introduces the first data structure storing any PLA in the indexing setting within lower-order terms to $\mathcal{B}_{\mathcal{I}}(\ell,\varepsilon,u,n,\mathbf{x})$, under the assumption that $\ell = \Omega(\log{u}/\log{\log{u}})$, with the same asymptotic query time (see Appendix~\ref{appendix:bound_pgm} for a detailed comparison).

The same observations of Section~\ref{subsec:space_close_compr} regarding the practicality of our data structures hold. The first one is the most practical, while the second is just theoretical and shows how close the lower bound can be approached while answering queries in (optimal) constant time.

%The first one is more practical in spirit and uses just $O(\log{\log{\frac{u}{\ell}}})$ bits per segment over the theoretical minimum when $\ell = \Omega(\log{u})$. The second one is just theoretical, showing how close the lower bound can be approached while answering queries in (optimal) constant time.

\section{Conclusions}\label{sec:conclusion}

%In this work, we addressed the problem of deriving the first compressibility measures for PLAs of monotone sequences in two distinct settings: (1) the compression setting, where the underlying sequence of points has consecutive $x$-values; (2) the indexing setting, where instead the sequence of points has consecutive $y$-values. Despite the names, such scenarios cover a broad range of applications, including $rank$ and $select$ dictionaries~\cite{FerraginaMV22, BoffaFV22}, time series compression~\cite{GuerraVBF25}, string indexing~\cite{WangTW20}, one-dimensional indexing~\cite{KraskaBCDP18, FerraginaV20}, multi-dimensional indexing~\cite{NathanDAK20, DingNAK20}, monotone minimal perfect hashing~\cite{FerraginaLSV23}, and range filtering~\cite{VaidyaKCKMI22}, to mention a few. The measures we obtained provide a theoretical foundation for understanding the space requirements of any learned data structures relying on PLAs in any of their components. 

In this work, we addressed the problem of deriving the first compressibility measures for PLAs of monotone sequences in two distinct settings: \emph{compression} and \emph{indexing}, which are general enough to cover a broad range of applications. The measures we obtained provide a theoretical foundation for understanding the space requirements of any learned data structure relying on PLAs in any of its components.

Based on our compressibility measures, we analyzed two representative data structures for storing PLAs, namely the one of the LA-vector (compression setting) and the one of the PGM-index (indexing setting), showing that they are both optimal up to a constant factor. 

Motivated by this study, we proposed two novel data structures for the above settings, which turn out to be succinct (thus optimal up to lower-order terms) in most practical cases while retaining the same asymptotic query time of the LA-vector and PGM-index. Overall, they offer strong theoretical guarantees and suggest efficient practical implementations, providing valuable tools for practitioners and researchers in learned data structures.

Future works could investigate an extension of our analysis to \emph{Piecewise Non-Linear Approximations}, for which successful applications already exist~\cite{GuerraVBF25}. Another interesting direction, complementary to our work, is the implementation of the proposed data structures with a thorough experimental evaluation of their space-time trade-offs.

%Future works could investigate an extension of our analysis to Piecewise Non-Linear Approximations, for which successful applications already exist~\cite{GuerraVBF25}. Another interesting direction, complementary to our work, is to design a compressibility measure for integer sequences based on $\ell$, the optimal number of segments produced by a PLA. Segments may be better at capturing certain kinds of regularities in the underlying sequence that are not exploited by other measures. For instance, the number of phrases $z$ in the Lempel-Ziv parsing of the sequence, the number of rules $g$ in a grammar representation of the sequence, and many others too~\cite{Navarro21}, or also measures tailored to monotone integer sequences~\cite{ArroyueloR22}.

\bibliography{lipics-v2021-sample-article}

\newpage

\appendix

\section{Detailed Proofs}\label{appendix:A}

\subsection{Complete Proof of Theorem~\ref{th:indexing_setting}}\label{appendix:proof_th_indexing}

We restate Theorem~\ref{th:indexing_setting} here to facilitate the presentation.

\begin{theorem}\label{th:indexing_setting_appendix}
    Let $\mathcal{I}(\ell,\varepsilon,u,n,\mathbf{x})$ be the set of PLAs of $\ell$ segments having maximum error $\varepsilon$, covering a monotone sequence of $n$ elements drawn from a universe of size $u$ in the indexing setting, and having $\mathbf{x} = x_1 < x_2 < \dots < x_{\ell}$ as the first covered $x$-values. Then:
    \begin{align*}
        \lvert \mathcal{I}(\ell,\varepsilon,u,n,\mathbf{x}) \rvert = &\binom{u-\ell(2\varepsilon-1)}{\ell}\binom{n-\ell(2\varepsilon-1)-1}{\ell-1}\left(\prod_{i=1}^{\ell-1}(x_{i+1}-x_{i}-1)\right)(2\varepsilon+1)^{2\ell}
    \end{align*}
\end{theorem}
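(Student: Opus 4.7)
The plan is to mirror the proof of Theorem~\ref{th:compr_setting} with the roles of the $x$- and $y$-axes swapped, carefully exploiting the fact that in the indexing setting each segment covers at least $2\varepsilon$ keys~\cite[Lemma 2]{FerraginaV20}. As in the compression case, I would parameterize each segment $s_i$ by the tuple $(x_i, \beta_i, \gamma_i, y_i, x'_i)$---the first covered $x$-value, the intercept, the segment value at the last covered position, the first covered $y$-value, and the last covered $x$-value---and count the number of valid tuples compatible with the constraints of Definition~\ref{def:pla}. Since the five groups of choices are independent once the appropriate ranges are identified, the answer will factor as a product of five counts.

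First I would enumerate the first covered $y$-values $y_1 = 1 < y_2 < \dots < y_\ell$. Because consecutive segments cannot overlap and each covers at least $2\varepsilon$ keys, the separation $y_{i+1} - y_i \geq 2\varepsilon$ holds for every $i < \ell$, and assuming the last segment also spans at least $2\varepsilon$ keys gives $y_\ell \leq n - 2\varepsilon + 1$. Applying the change of variables $\bar{y}_i = y_i - (i-1)(2\varepsilon - 1)$ turns these into $\ell-1$ distinct integers $\bar{y}_2 < \dots < \bar{y}_\ell$ drawn from $\{2, \dots, n - \ell(2\varepsilon - 1)\}$, giving $\binom{n - \ell(2\varepsilon - 1) - 1}{\ell - 1}$ choices. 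The symmetric argument for the first covered $x$-values---strictly increasing, with $x_{i+1} - x_i \geq 2\varepsilon$ because each segment covers at least $2\varepsilon$ distinct abscissae, and $x_\ell \leq u - 2\varepsilon + 1$---uses $\bar{x}_i = x_i - (i-1)(2\varepsilon - 1)$ and yields $\binom{u - \ell(2\varepsilon - 1)}{\ell}$ choices.

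Next, the intercepts $\beta_i$ and segment-end values $\gamma_i$ contribute the two $(2\varepsilon+1)^\ell$ factors exactly as in the compression case: $\beta_i$ ranges over the $2\varepsilon+1$ integers in $[y_i - \varepsilon, y_i + \varepsilon]$, while in the indexing setting the last covered $y$-value of segment $i$ is implicitly $y_{i+1}-1$, forcing $\gamma_i$ into $[y_{i+1}-1-\varepsilon, y_{i+1}-1+\varepsilon]$. Finally, the last covered $x$-value $x'_i$ of segment $i$ (for $i < \ell$) must satisfy $x_i < x'_i < x_{i+1}$---strict on the right because abscissae are strictly increasing, strict on the left because the segment covers at least two points when $\varepsilon \geq 1$---giving $x_{i+1} - x_i - 1$ independent choices and producing the $\prod_{i=1}^{\ell-1}(x_{i+1} - x_i - 1)$ factor. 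Multiplying the five independent counts gives the claim.

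The subtlety I expect to be the main obstacle is handling the boundary cleanly: verifying that the upper bound $y_\ell \leq n - 2\varepsilon + 1$ (and the analogous one for $x_\ell$) is the right convention to obtain the symmetric binomials appearing in the claim, and that the change of variables converts the $2\varepsilon$-separation into ordinary distinctness without off-by-one errors in the binomial arguments. Everything else is a direct application of the product rule across the independent degrees of freedom per segment, so the structural resemblance to the compression-setting proof should make the remaining bookkeeping routine.
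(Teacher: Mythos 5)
Your proposal is correct and follows essentially the same route as the paper's proof: the same change of variables $\bar{x}_i = x_i - (i-1)(2\varepsilon-1)$ and $\bar{y}_i = y_i - (i-1)(2\varepsilon-1)$ exploiting the $2\varepsilon$-coverage lemma, the same independent choice of $x'_i \in [x_i+1, x_{i+1}-1]$, and the same $(2\varepsilon+1)$-sized windows for $\beta_i$ around $y_i$ and for $\gamma_i$ around the implicit last covered $y$-value $y_{i+1}-1$. The boundary conventions you flag as a potential subtlety ($y_\ell \leq n-2\varepsilon+1$ and $x_\ell \leq u-2\varepsilon+1$) are exactly the ones the paper adopts, so the bookkeeping goes through as you anticipate.
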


\begin{proof}
    Let us denote with $x_{i}$ the first $x$-value covered by the $i$-th segment. We start by noticing that a segment always covers at least $2\varepsilon$ points (elements) in the indexing setting. This is because, in the indexing setting, points have monotonically increasing contiguous $y$-values and so the horizontal segment $y = i + \varepsilon$ is always an $\varepsilon$-approximation of any subsequence of $2\varepsilon$ consecutive elements, see~\cite[Lemma 2]{FerraginaV20}. Hence, it is $x_{i+1} - x_{i} \geq 2\varepsilon$ and $x_{\ell} \leq u - 2\varepsilon + 1$.

    We perform a similar change of variables as in the proof of Theorem \ref{th:compr_setting} by defining $\bar{x}_{i} = x_{i} - (2\varepsilon-1)(i-1)$. This guarantees that $x_{i+1} - x_{i} = \bar{x}_{i+1}-\bar{x}_{i} + 2\varepsilon-1 \geq 2\varepsilon$ because $\bar{x}_{i+1} - \bar{x}_{i} \geq 1$. Moreover, $\bar{x}_{1} \geq 1$ and $\bar{x}_{\ell} = x_{\ell} - (2\varepsilon-1)(\ell-1) \leq u - \ell(2\varepsilon-1)$, since $x_{\ell} \leq u - 2\varepsilon + 1$. Therefore, choosing the $\ell$ starting keys of each segment correspond to selecting $\ell$ distinct elements from $\{1,2,\dots,u-\ell(2\varepsilon-1)\}$. Hence, the total number of ways to perform this selection is $\binom{u-\ell(2\varepsilon-1)}{\ell}$.

    Next, since the $x$-values are not consecutive in the indexing case, we are free to choose the last covered $x$-value of each segment, which we denote by $x'_i$ (see Figure~\ref{fig:pla_indexing}). To ensure a valid PLA, we require that $x'_i \geq x_i + 1$ and $x'_i \leq x_{i+1} - 1$ for every $1 \leq i < \ell$, with $x'_{\ell} = u$. Since each $x'_i$ can be independently chosen from the interval $[x_i + 1, x_{i+1} - 1]$, the total number of possible choices is $\prod_{i=1}^{\ell-1}(x_{i+1}-x_{i}-1)$. Notice that this quantity is always larger than zero because $x_{i+1} - x_{i} \geq 2\varepsilon$ and we assumed $\varepsilon \geq 1$.

    Moving to the $y$-axis (which represents the positions of the keys), let $y_{i}$ be the position of the first key covered by the $i$-th segment and $\beta_{i}$ its intercept (see Figure~\ref{fig:pla_indexing}). Choosing $\beta_{i}$ can be reduced to first selecting $y_{i}$, and then choosing any value in the interval $[y_{i}-\varepsilon, y_{i}+\varepsilon]$ whose size is $2\varepsilon+1$. As already mentioned, in the indexing case, each segment covers at least $2\varepsilon$ keys, hence $y_{i+1}-y_{i} \geq 2\varepsilon$ and $y_{\ell} \leq n-2\varepsilon+1$. Noticing that $y_{1} = 1$ and performing the change of variables $\bar{y}_{i} = y_{i} - (2\varepsilon-1)(i-1)$, we obtain that $\bar{y}_{2} \geq 2$ and $\bar{y}_{\ell} \leq n-\ell(2\varepsilon-1)$ since $y_{2} \geq 2\varepsilon + 1$ and $y_{\ell} \leq n-2\varepsilon+1$. It follows that choosing the $\ell$ intercepts is equivalent to first choosing $\ell-1$ distinct elements in $\{2,3,\dots,n-\ell(2\varepsilon-1)\}$ and, then, for each of them, selecting a value inside an interval of size $2\varepsilon+1$. Therefore, the total number of ways to perform this choice is $\binom{n-\ell(2\varepsilon-1)-1}{\ell-1}(2\varepsilon+1)^{\ell}$.
    
    To conclude the proof, we need to consider for every $i$-th segment its slope, and thus the $y$-value on its last covered $x$-value. Let us denote this value by $\gamma_{i}$. Since we are in the indexing setting, we know that the $y$-values of the sequence are monotonically increasing and contiguous, so the last covered $y$-value is $y_{i+1} - 1$, see Figure~\ref{fig:pla_indexing}, and thus $\gamma_{i}$ can assume any value within the interval $[y_{i+1} - 1 - \varepsilon, y_{i+1} - 1 + \varepsilon]$ to make that segment a valid $\epsilon$-approximation. Therefore, we introduce a $(2\varepsilon+1)^{\ell}$ factor to the previous quantity, and thus our claim follows.  
\end{proof}

We conclude by noting that, unlike in the compression setting, in the indexing case each segment covers at least $2\varepsilon$ elements (except possibly the last one). Consequently, a lower bound to the counting of Theorem~\ref{th:indexing_setting_appendix} can be obtained by replacing $x_{i+1}-x_{i}$ with $2\varepsilon$. This yields a formula that is independent of the sequence of first covered $x$-values.

\begin{corollary}\label{cor:indexing_setting_gen_appendix}
    Let $\mathcal{I}(\ell,\varepsilon,u,n)$ be the set of PLAs of $\ell$ segments having maximum error $\varepsilon$, covering a monotone sequence of $n$ elements drawn from a universe of size $u$ in the indexing setting, then:
    \begin{align*}
    \lvert \mathcal{I}(\ell,\varepsilon,u,n) \rvert = &\binom{u-\ell(2\varepsilon-1)}{\ell}\binom{n-\ell(2\varepsilon-1)-1}{\ell-1}(2\varepsilon-1)^{\ell-1}(2\varepsilon+1)^{2\ell}
    \end{align*}
\end{corollary}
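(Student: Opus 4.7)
The plan is to derive the stated formula directly from Theorem~\ref{th:indexing_setting_appendix} by eliminating the $\mathbf{x}$-dependent factor via a worst-case substitution, as hinted in the paragraph preceding the corollary. In the expression of Theorem~\ref{th:indexing_setting_appendix}, the only factor that depends on the particular sequence of first covered $x$-values is the product
\[
\prod_{i=1}^{\ell-1}(x_{i+1}-x_i-1),
\]
which arose in that proof from counting the admissible last covered $x$-values $x'_i \in [x_i+1,\, x_{i+1}-1]$ of each segment. The two binomials and the $(2\varepsilon+1)^{2\ell}$ factor do not involve $\mathbf{x}$, so they will appear unchanged in the final expression.

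Next I would invoke the key structural property of the indexing setting that was already used in Theorem~\ref{th:indexing_setting_appendix}: since the $y$-coordinates are consecutive integers, every segment (possibly except the last) must cover at least $2\varepsilon$ input points, see~\cite[Lemma 2]{FerraginaV20}. Because the $x$-coordinates are strictly increasing integers, this forces $x_{i+1}-x_i \geq 2\varepsilon$ for every $1 \leq i \leq \ell-1$, and therefore $x_{i+1}-x_i-1 \geq 2\varepsilon-1$. Consequently,
\[
\prod_{i=1}^{\ell-1}(x_{i+1}-x_i-1) \geq (2\varepsilon-1)^{\ell-1},
\]
and this lower bound is attained precisely when every segment covers the minimum possible number $2\varepsilon$ of points, a configuration that is admissible thanks to the standing assumption $\varepsilon \geq 1$ which keeps each factor strictly positive.

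Substituting this minimum value of the product into the formula of Theorem~\ref{th:indexing_setting_appendix} eliminates the dependence on $\mathbf{x}$ and immediately yields the right-hand side stated in the corollary. There is no real obstacle here beyond identifying the unique $\mathbf{x}$-dependent factor of Theorem~\ref{th:indexing_setting_appendix} and verifying the minimum-spacing case is achievable; both facts have been established in the cited theorem and lemma, so the argument reduces to a clean substitution rather than any new combinatorial derivation. Consistently with the sentence preceding the corollary, the resulting formula should be read as the tightest $\mathbf{x}$-free lower bound on $\lvert\mathcal{I}(\ell,\varepsilon,u,n,\mathbf{x})\rvert$ over all admissible $\mathbf{x}$, and hence as the value of $\lvert\mathcal{I}(\ell,\varepsilon,u,n)\rvert$ obtained by specialising to the minimum-coverage configuration.
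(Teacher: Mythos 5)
Your proposal is correct and follows essentially the same route as the paper: it identifies $\prod_{i=1}^{\ell-1}(x_{i+1}-x_i-1)$ as the only $\mathbf{x}$-dependent factor in Theorem~\ref{th:indexing_setting_appendix}, lower-bounds each factor by $2\varepsilon-1$ via the spacing constraint $x_{i+1}-x_i \geq 2\varepsilon$ from~\cite[Lemma 2]{FerraginaV20}, and substitutes to obtain the $\mathbf{x}$-free formula. Your closing remark that the stated equality should be read as the tightest $\mathbf{x}$-independent lower bound over admissible $\mathbf{x}$ matches exactly how the paper itself interprets the corollary in the sentence that follows it.
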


Clearly, for any choice of $\mathbf{x}$, we have $\lvert \mathcal{I}(\ell,\varepsilon,u,n) \rvert \leq \lvert \mathcal{I}(\ell,\varepsilon,u,n, \mathbf{x}) \rvert$. Therefore, a lower bound based on Corollary~\ref{cor:indexing_setting_gen_appendix} is more general, as it depends only on the sequence length $n$, the universe size $u$, the maximum error $\varepsilon$, and the number of segments $\ell$. However, this comes at the cost of underestimating the minimum number of bits required to store any PLA in the indexing setting when segments cover more than $2\varepsilon$ elements.

\subsection{PGM-index Space Usage in Terms of $\mathcal{B}_{\mathcal{I}}(\ell, \varepsilon, u, n, \mathbf{x})$}\label{appendix:bound_pgm}

As in Section~\ref{subsec:space_close_compr}, we start by considering the PGM-index storage scheme for PLAs of Lemma~\ref{lem:pgm_space_time} solving $predict$ queries with a binary search on the first covered keys, reporting its space usage in bits here to facilitate comparison:

\begin{equation}\label{eq:pgm_space_appendix}
    \ell(1.92 + \log{\frac{n}{\ell}} + \log{n} + 2\log{u} + o(1))
\end{equation}

Since we assumed $\ell = o(n)$ and $\varepsilon = O(1)$, the terms $\ell\log{\frac{n}{\ell}}$ and $\ell\log{u}$ in Equation~\ref{eq:pgm_space_appendix} can be respectively bounded as $\log{\binom{n-\ell(2\varepsilon-1)-1}{\ell-1}} + O(\ell + \log{n})$ and $\log{\binom{u-\ell(2\varepsilon-1)}{\ell}} + O(\ell\log{\ell} + \log{u})$ (see Section~\ref{sec:background} and Appendix~\ref{appendix:proof_indexing}). Noticing that, since $\varepsilon$ is a positive constant, $1.92\ell$ is equal to $2\ell\log{(2\varepsilon+1)}-\Theta(\ell)$, moreover there are some sparing terms of $\ell \log u$ (because of the factor $2$) and $\ell \log n$ in Equation~\ref{eq:pgm_space_appendix}. Hence, Equation~\ref{eq:pgm_space_appendix} can be rewritten in the following form that reminds the lower bound of Corollary~\ref{cor:indexing_setting}.

\begin{equation}\label{eq:pgm_space_appendix2}
    \log{\binom{n-\ell(2\varepsilon-1)-1}{\ell-1}} + \log{\binom{u-\ell(2\varepsilon-1)}{\ell}} + 2\ell\log{(2\varepsilon+1)} + \Theta(\ell(\log{u}+\log{n}))
\end{equation}

Again, the summation $\sum_{i=1}^{\ell-1}\log{(x_{i+1}-x_{i}-1)}$ appearing in Corollary~\ref{cor:indexing_setting} but not in Equation~\ref{eq:pgm_space_appendix2} can be handled using the same argument as in Section~\ref{subsec:space_close_compr}. Therefore, the space usage of the storage scheme for PLAs of the PGM-index can be expressed as follows:

\begin{equation*}
    \mathcal{B}_{\mathcal{I}}(\ell, \varepsilon, u, n, \mathbf{x}) + \Theta(\ell(\log{u} + \log{n}))
\end{equation*}

Notice that $\Theta(\ell(\log{u} + \log{n})) = \Theta(\ell \log{u}) \not\subseteq o(\mathcal{B}_{\mathcal{I}}(\ell, \varepsilon, u, n, \mathbf{x}))$, because of the presence in $\mathcal{B}_{\mathcal{I}}(\ell, \varepsilon, u, n, \mathbf{x})$ of terms such as $\ell\log{\frac{n}{\ell}}$ and $\ell\log{\frac{u}{\ell}}$. Therefore, the storage scheme for PLAs of the PGM-index is {\em compact}, but not succinct. Instead, our data structure of Theorem~\ref{th:succ_ds_indx} is succinct, offering a significantly smaller overhead to the optimal space usage while providing the same asymptotic query time.

Finally, the same result can be achieved when considering the data structure of Lemma~\ref{lem:pgm_space_time}, answering queries in $O(c)$ time for any given constant $c > 0$. Such a data structure is compact for $\ell = \Omega(u/\log^{c+1}{n})$, while ours is succinct and provides the same query time.

\subsection{Using Space Close to $\mathcal{B}_{\mathcal{I}}(\ell,\varepsilon,u,n,\mathbf{x})$}\label{appendix:proof_indexing}

Given a PLA $\{s_{1},s_{2},\dots,s_{\ell}\} \in \mathcal{I}(\ell, \varepsilon, u, n)$, we proceed similarly as in the compression setting. We represent a generic segment $s_{i}$ as a tuple $s_{i} = (x_{i}, x'_{i}, \beta_{i}, \gamma_{i}, y_{i})$. $x_{i}$ and $x'_{i}$ are the first and last covered $x$-values, $\beta_{i}$ is the intercept, $\gamma_{i}$ is the last $y$-value given by the segment, while $y_{i}$ is the first covered $y$-value. To encode each segment, To encode each segment, we first store two bit sequences, $X$ and $Y$, which respectively represent the first $x$-value and the last $y$-value covered by each segment. The sequence $X$ encodes, in unary, the deltas between the $x$-values of consecutive segments (i.e., $x_{i+1}$ and $x_i$), shifted by a constant to further reduce space usage. This representation enables constant time access to any value via a simple $select$ operation. An identical strategy is applied to the sequence $Y$, which encodes the deltas between successive $y$-values. Such sequences are detailed as follows.

\begin{itemize}
    \item $X = 0^{x_{1}-1}10^{x_{2}-x_{1}-2\varepsilon}1\dots0^{x_{\ell}-x_{\ell-1}-2\varepsilon}1$. The sequence is well defined since $x_{1} \geq 1$ and $x_{i+1}-x_{i} \geq 2\varepsilon$. The length of $X$ is $x_{1} + \sum_{i=1}^{\ell-1}(x_{i+1}-x_{i}-2\varepsilon+1) = x_{\ell} - (\ell-1)(2\varepsilon-1)$, which is indeed upper bounded by $u-\ell(2\varepsilon-1)$ since $x_{\ell} \leq u-2\varepsilon+1$, and contains $\ell$ ones. The $i$-th element of $X$ can be accessed by noticing that $x_{i} = select_{1}(X, i)+2\varepsilon(i-1)$.

    \item $Y = 0^{y_{2}-y_{1}-2\varepsilon}1\dots 0^{y_{\ell}-y_{\ell-1}-2\varepsilon}1$. The sequence is well defined since $y_{i+1}-y_{i} \geq 2\varepsilon$. The length of $Y$ is $\sum_{i=1}^{\ell-1}(y_{i+1}-y_{i}-2\varepsilon+1) = y_{\ell} - y_{1} - (\ell-1)(2\varepsilon-1)$, which is upper bounded by $n-\ell(2\varepsilon-1)-1$ as $y_{1} = 1$ and $y_{\ell} \leq n - 2\varepsilon+1$, and contains $\ell-1$ ones. Any element of $Y$ can be retrieved by noticing that $y_{1} = 1$ and for any $i > 1$, $y_{i} = select_{1}(Y, i-1) + 2\varepsilon(i-1)$.
\end{itemize}

Again, given our assumption that $\ell = o(n)$, both sequences are sparse. Since they are also monotonically increasing by construction, we represent them using the Elias–Fano encoding scheme. By Theorem~\ref{th:elias_fano}, using the same argument as in the compression setting, the space required to store $X$ and $Y$ is:

\begin{equation}\label{eq:ef_space_i}
    \log{\binom{u-\ell(2\varepsilon-1)}{\ell}} + \log{\binom{n-\ell(2\varepsilon-1)-1}{\ell-1}} + O(\ell+\log{u})
\end{equation}

Next, we focus on the sequence of last covered keys of each segment, namely $x'_{1},x'_{2},\dots,x'_{\ell-1}$, we do not consider $x'_{\ell}$ as it is always equal to $u$. Since each $x'_{i} \in [x_{i}+1,x_{i+1}-1]$, we exploit the same technique as in the compression setting using the following number of bits:

\begin{equation}\label{eq:space_last_y_indx}
    \left(\sum_{i=1}^{\ell-1}\log{(x_{i+1}-x_{i}-1)}\right) + O(\ell\log{\log{\frac{u}{\ell}}})
\end{equation}

Unlike the compression setting, the $y$-values are consecutive; hence, the position of the last covered value of the $i$-th segment is implicitly given by $y_{i+1}-1$. Moreover, since each segment is a $\varepsilon$-approximation for the points it covers, $|\beta_{i}-y_{i}| \leq 2\varepsilon$ and $|\gamma_{i}-(y_{i+1}-1)| \leq 2\varepsilon$. Therefore, we store each intercept $\beta_{i}$ and last $y$-value $\gamma_{i}$ as deltas with respect to $y_{i}$ and $y_{i+1}-1$ inside two arrays $\Delta_{\beta}$ and $\Delta_{\gamma}$ using $2\ell\log{(2\varepsilon+1)}$ bits. Notice that, since $y_{\ell+1}$ is not available, we store $\gamma_{\ell}$ directly using only $O(\log{n})$ bits. Combining this with Equation~\ref{eq:ef_space_i} and~\ref{eq:space_last_y_indx}, and recalling Corollary~\ref{cor:indexing_setting}, the space usage of our storage scheme can be expressed as follows:

\begin{equation}
    \mathcal{B}_{\mathcal{I}}(\ell,\varepsilon,u,n,\mathbf{x}) + O(\ell\log{\log{\frac{u}{\ell}}} + \log{u})
\end{equation}

That is succinct space, since $O(\ell\log{\log{\frac{u}{\ell}}} + \log{u}) \subseteq o(\mathcal{B}_{\mathcal{I}}(\ell,\varepsilon,u,n,\mathbf{x}))$ because $\mathcal{B}_{\mathcal{I}}(\ell,\varepsilon,u,n,\mathbf{x})$ includes a term that is $\Theta(\ell \log{\frac{u}{\ell}})$, and we consider the case that $\ell = \Omega(\log u/\log\log u)$. Without adding any extra space on top of this representation, performing a $predict(x)$ query proceeds exactly as in the compression setting. Hence, after computing the index $i$ of the segment covering $x$, we retrieve its the parameters $x_{i}$, $\beta_{i}$, $\gamma_{i}$, and $x'_{i}$ using a procedure that closely resembles the one already detailed in Section~\ref{subsec:space_close_compr}, thus returning:

\begin{equation*}
    predict(x) = \left\lfloor (x-x_{i})\frac{(\gamma_{i}-\beta_{i})}{(x'_{i}-x_{i})} \right\rfloor +  \beta_{i}
\end{equation*}

The cost of a $predict$ is again dominated by the predecessor search on $X$, which overall makes the query time $O(\log{\ell})$. Still, it is possible to reduce this time by directly storing the sequence $X = x_{1},x_{2},\dots,x_{\ell}$, whose universe is bounded by $u-(2\varepsilon-1)$ and contains $\ell$ ones. The index of the segment covering a given value $x$ can be computed with a $rank$ operation, and it allows access to any $x_{i}$ using a $select$ operation. Applying Theorem~\ref{th:succincter} on $X$ guarantees that for any constant $c > 0$, both $rank$ and $select$ operations take $O(c)$ time while requiring the following space usage in bits: 

\begin{equation*}
    \log{\binom{u-(2\varepsilon-1)}{\ell}} + O\left(\frac{u}{\log^{c}{u}}\right)
\end{equation*}

Since we assumed that $\varepsilon = O(1)$ and $\ell = o(n)$, hence $\ell = o(u)$, we refactor the term $\log{\binom{u-(2\varepsilon-1)}{\ell}}$ as follows:

\begin{align*}
    \log{\binom{u-(2\varepsilon-1)}{\ell}} &= \log{\binom{u-\ell(2\varepsilon-1)}{\ell}} + \log{\left(\frac{\prod_{i=(2\varepsilon-1)}^{\ell(2\varepsilon-1)-1}(u-i)}{\prod_{i=(2\varepsilon-1)}^{\ell(2\varepsilon-1)-1}(u-i-\ell)}\right)}\\
    &\leq \log{\binom{u-\ell(2\varepsilon-1)}{\ell}} + (\ell-1)(2\varepsilon-1)\log{\left(\frac{u-(2\varepsilon-1)}{u-(\ell+1)(2\varepsilon-1)+1}\right)}\\
    &= \log{\binom{u-\ell(2\varepsilon-1)}{\ell}} + O(\ell)
\end{align*}

Where the inequality is obtained by upper-bounding each term in the numerator by $u-(2\varepsilon-1)$ and lower-bounding each term in the denominator by $u-(\ell+1)(2\varepsilon-1)+1$. Thus, the overall space usage can be expressed as:

\begin{equation}
    \mathcal{B}_{\mathcal{I}}(\ell,\varepsilon,u,n,\mathbf{x}) + O(\ell\log{\log{\frac{u}{\ell}}} + \frac{u}{\log^{c}u})
\end{equation}

Again, this is succinct space since $O(\ell\log{\log{\frac{u}{\ell}}} + \frac{u}{\log^{c}u}) \subseteq o(\mathcal{B}_{\mathcal{I}}(\ell,\varepsilon,u,n,\mathbf{x}))$ as long as $\ell = \Omega(u/\log^{c}{u})$ (the proof is similar to the one in Section~\ref{subsec:space_close_compr}). As a result, we proved the equivalent of Theorem~\ref{th:succ_ds_compr} in the indexing setting, which gives the first data structures able to store any PLA in such a scenario using space close to the information-theoretic lower bound of Corollary~\ref{cor:indexing_setting} while still supporting efficient $predict$ operations.

\begin{theorem}\label{th:succ_ds_indx_appendix}
    There exist data structures storing any PLA from $\mathcal{I}(\ell, \varepsilon, u, n)$ as follows:
    \begin{itemize}
        \item Using $\mathcal{B}_{\mathcal{I}}(\ell,\varepsilon,u,n,\mathbf{x}) + O(\ell\log{\log{\frac{u}{\ell}}}+\log{u})$ bits of space, while supporting the $predict$ operation in $O(\log{\ell})$ time.
        \item Using $\mathcal{B}_{\mathcal{I}}(\ell,\varepsilon,u,n,\mathbf{x}) + O(\ell\log{\log{\frac{u}{\ell}}} + u/\log^{c}{u})$ bits of space for any constant $c>0$, while supporting the $predict$ operation in $O(c)$ time.
    \end{itemize}
\end{theorem}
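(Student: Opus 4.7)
The plan is to mirror the construction of Section~\ref{subsec:space_close_compr}, swapping the roles of the two axes so as to exploit the fact that in the indexing setting the $y$-values, rather than the $x$-values, are consecutive. Concretely, each segment $s_i$ is described by the tuple $(x_i, x'_i, \beta_i, \gamma_i, y_i)$, namely its first and last covered $x$-values, its intercept, the $y$-value it assigns to $x'_i$, and the first covered $y$-value; the last covered $y$-value need not be stored because consecutivity forces it to equal $y_{i+1}-1$. The key structural bounds $x_{i+1}-x_i \geq 2\varepsilon$ and $y_{i+1}-y_i \geq 2\varepsilon$ (Lemma~2 of~\cite{FerraginaV20}) are what will let the two binomials of Corollary~\ref{cor:indexing_setting} emerge naturally from the encoding.

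For the encoding, I would first store the two monotone sequences $x_1 < \dots < x_\ell$ and $y_1 < \dots < y_\ell$ after the shift $\bar{z}_i = z_i - (2\varepsilon-1)(i-1)$, which is exactly the shift used in the counting proof of Theorem~\ref{th:indexing_setting} and places the shifted keys in universes of sizes $u-\ell(2\varepsilon-1)$ and $n-\ell(2\varepsilon-1)$. Encoding both with Elias--Fano (Theorem~\ref{th:elias_fano}) costs $\log\binom{u-\ell(2\varepsilon-1)}{\ell} + \log\binom{n-\ell(2\varepsilon-1)-1}{\ell-1} + O(\ell + \log u)$ bits, matching the first two terms of the lower bound. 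The sequence of last covered $x$-values requires more care, since each $x'_i$ varies within $[x_i+1, x_{i+1}-1]$; I would reuse the dense-pointers trick of Section~\ref{subsec:space_close_compr}, concatenating the $\log(x_{i+1}-x_i-1)$-bit relative codes of the $x'_i$ into a bit vector $B$ and storing an Elias--Fano encoded array $P$ of their starting offsets, which by Jensen's inequality uses $O(\ell\log\log(u/\ell))$ bits on top of the unavoidable $\sum_{i=1}^{\ell-1}\log(x_{i+1}-x_i-1)$. Finally, each $\beta_i$ and each $\gamma_i$ is stored as a $\log(2\varepsilon+1)$-bit delta from $y_i$ and from the implicit $y_{i+1}-1$ respectively, yielding the remaining $2\ell\log(2\varepsilon+1)$ bits of Corollary~\ref{cor:indexing_setting}.

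The $predict(x)$ operation proceeds exactly as in the compression setting: first identify the index $i$ of the covering segment via a binary predecessor search on the first covered $x$-values in $O(\log\ell)$ time, then recover $x_i, x'_i, y_i, \beta_i, \gamma_i$ with a constant number of $select$ queries on the Elias--Fano structures and bit-reads from $B$, $\Delta_\beta$, $\Delta_\gamma$, and finally return $\lfloor(x-x_i)(\gamma_i-\beta_i)/(x'_i-x_i)\rfloor + \beta_i$. For the $O(c)$-time variant I would replace the Elias--Fano encoding of the first $x$-values with the Succincter data structure of Theorem~\ref{th:succincter}, so that the segment-index can be obtained via $rank$ in $O(c)$ time; a binomial refactoring turning $\log\binom{u-(2\varepsilon-1)}{\ell}$ into $\log\binom{u-\ell(2\varepsilon-1)}{\ell} + O(\ell)$, identical in spirit to the one carried out for $\log\binom{n-1}{\ell-1}$ in Section~\ref{subsec:space_close_compr}, then absorbs the new binomial into the lower bound at the cost of an $O(u/\log^c u)$ redundancy.

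The hardest step is the final bookkeeping: verifying that both additive terms $O(\ell\log\log(u/\ell)+\log u)$ and $O(\ell\log\log(u/\ell)+u/\log^c u)$ sit inside $o(\mathcal{B}_{\mathcal{I}}(\ell,\varepsilon,u,n,\mathbf{x}))$. Replicating the argument of Section~\ref{subsec:space_close_compr}, which relies on the $\Theta(\ell\log(u/\ell))$ growth of $\mathcal{B}_{\mathcal{I}}$ obtained from the two binomial coefficients, this is guaranteed under the mild regime $\ell = \Omega(\log u/\log\log u)$ for the first trade-off and $\ell = \omega(u/(\log u\,\log^c u))$ for the second, so the resulting data structures are succinct in the sense advertised by the statement.
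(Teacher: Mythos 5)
Your proposal is correct and follows essentially the same route as the paper's proof in Appendix~\ref{appendix:proof_indexing}: the same segment tuple $(x_i, x'_i, \beta_i, \gamma_i, y_i)$, Elias--Fano encoding of the $(2\varepsilon-1)$-shifted first covered $x$- and $y$-values (the paper phrases this via unary delta sequences, which is equivalent), the dense-pointers trick for the $x'_i$, delta storage of $\beta_i$ and $\gamma_i$, and the switch to the Succincter structure plus the binomial refactoring $\log\binom{u-(2\varepsilon-1)}{\ell} = \log\binom{u-\ell(2\varepsilon-1)}{\ell} + O(\ell)$ for the $O(c)$-time variant. The only omitted detail is that $\gamma_\ell$ must be stored explicitly in $O(\log n)$ bits since $y_{\ell+1}$ does not exist, which does not affect the stated bounds.
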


Finally, we redirect to Section~\ref{subsec:space_close_indx} for a detailed discussion on how our data structures relate to the storage scheme for PLAs of the PGM-index of Lemma~\ref{lem:pgm_space_time}.

\end{document}